\newtheorem{theorem}{Theorem}
\newtheorem{proposition}{Proposition}
\newtheorem{definition}{Definition}
\newtheorem{corollary}{Corollary}
\theoremstyle{remark}
\newtheorem{remark}{Remark}
\newenvironment{acknowledgement}{\par\medskip\noindent\emph{Acknowledgment.}}
\newcommand\e{\mathrm{e}}
\def\le{\leqslant}
\let\Re\undefined 
\DeclareMathOperator{\Re}{Re}
\DeclareMathOperator{\supp}{supp}
\newif\ifper\pertrue
\def\per{.}
\newcounter{aucount}
\def\HarvardComma{,}
\newif\ifedplural
\def\au#1#2{{#1 #2}\addtocounter{aucount}{1}}
\def\lau#1#2{{#1 #2},\setcounter{aucount}{0}}
\def\ed#1#2{\ifnum\theaucount=0(\fi{#1 #2}\addtocounter{aucount}{1}}
\def\led#1#2{\ifnum\theaucount=0(\edpluralfalse\else\edpluraltrue\fi{#1
    #2} (\editorname.)):\setcounter{aucount}{0}}
\def\editorname{\ifedplural Eds\else Ed\fi}
\def\et{\ifnum\theaucount=1\else\HarvardComma\fi{} and\ }
\def\bti{\@ifnextchar[\bbti\bbbti}
\def\bbti[#1]#2{\emph{#2}, #1,}
\def\bbbti#1{\emph{#1},}
\def\z{\@ifnextchar[\zz\zzz}
\def\zz[#1]#2#3#4#5{\perfalse{#2} \textbf{#3} (#5), #4 [#1]\per}
\def\zzz#1#2#3#4{{#1} \textbf{#2} (#4), #3\ifper\per\fi\pertrue}
\def\pub{\@ifstar\pubstar\pubnostar}
\def\pubnostar{\@ifnextchar[\@@pubnostar\@pubnostar}
\def\@@pubnostar[#1]#2#3#4{#1, #2, #3, #4\per}
\def\@pubnostar#1#2#3{#1, #2, #3\ifper\per\fi\pertrue}
\def\pubstar[#1]#2#3#4{\perfalse #2, #3, #4 [#1]\per}
\newcommand{\beq}{\begin{equation}}
\newcommand{\eeq}{\end{equation}}
\newcommand{\ba}{\begin{array}}
\newcommand{\ea}{\end{array}}
\newcommand{\bea}{\begin{eqnarray}}
\newcommand{\eea}{\end{eqnarray}}
\newcommand{\Pb}{\mathbb{P}}
\newcommand{\E}{\mathbb{E}}
\newcommand{\R}{\mathbb{R}}
\newcommand{\Z}{\mathbb{Z}}
\begin{document}

\title{Conductivity and the Current-Current Correlation Measure}

\author[J.\ M.\ Combes]{Jean-Michel Combes}
\address{D\'epartement de Math\'ematiques,
Universit\'e du Sud: Toulon et le Var,
F-83130 La Garde, France and CPT, CNRS, Luminy Case 907, Cedex 9, F-13288 Marseille, France}
\email{combes@cpt.univ-mrs.fr}

\author[F.\ Germinet]{Francois Germinet}
\address{D\'epartement de Math\'ematiques,
Universit\'e de Cergy-Pontoise, CNRS UMR 8088, IUF,
F-95000, Cergy-Pontoise, France, and Institut Universitaire de France. F-75005 Paris, France}
\email{germinet@math.u-cergy.fr}

\author[P.\ D.\ Hislop]{Peter D.\ Hislop}
\address{Department of Mathematics,
    University of Kentucky,
    Lexington, Kentucky  40506-0027, USA}
\email{hislop@ms.uky.edu}


\begin{center}
{\it Dedicated to the memory of our friend Pierre Duclos}
\end{center}

\vspace{.1in}

\begin{abstract}
We review various formulations of conductivity
for one-particle Hamiltonians and relate them to the current-current correlation measure.
We prove that the current-current correlation measure for random Schr\"odinger
operators has a density at coincident energies
provided the energy lies in a localization regime.
The density vanishes at such energies and an upper bound
on the rate of vanishing is computed.
We also relate the current-current correlation measure to the localization length.
\end{abstract}

\maketitle
\thispagestyle{empty}


\section{Current-Current Correlation Measure}
\label{cccm-intro}

Higher-order correlation functions for random
Schr\"odinger operators are essential for an understanding of the transport
properties of the system. In this paper, we study the current-current correlation measure
$M( dE_1, dE_2)$ that describes the correlations between electron currents and
is an essential ingredient in the theory of DC conductivity for disordered systems.
The connection between the current-current correlation measure (referred to here as the ccc-measure)
and the DC conductivity
is expressed through the Kubo formula.

We consider a one-particle random, ergodic Hamiltonian $H_\omega$
on $\ell^2 (\Z^d)$ or on $L^2 (\R^d)$. In general $H_\omega$ has the form
$H_\omega =  \frac 12 (- i \nabla - A_0)^2 + V_{per} + V_\omega$, where $A_0$ is a background vector potential,
$V_{per}$ is a real-valued, periodic function, and $V_\omega$ is a real-valued, random potential.
We always assume that $H_\omega$ is self-adjoint on a domain independent of $\omega$.
We restrict $H_\omega$ to
a box $\Lambda \subset \R^d ~\mbox{or} ~\Z^d$ with Dirichlet boundary conditions.
The finite-volume {\it current-current
correlation measures} $M_{\alpha, \beta}^{(\Lambda)}$, for $\alpha, \beta = 1, \ldots, d$,
are the point measures on $\R^2$ defined as follows.
Let $E_m(\Lambda)$ be the eigenvalues of $H_\Lambda$ with normalized eigenfunctions $\phi_m^{(\Lambda)}$.
Let $\nabla_\alpha H_\Lambda \equiv i [ H_\Lambda, x_\alpha] = p_\alpha - (A_0)_\alpha$ be the
$\alpha^{\rm th}$-component of the local velocity operator,
where $p_\alpha = -i \partial / \partial x_\alpha$ is the momentum.
The local velocity operator is independent of the randomness. We denote by
$v_{\alpha; i,j}^{(\Lambda)}$ the matrix elements of the $\alpha^{\rm th}$-component of
the velocity operator in the
eigenstates $\phi_i^{(\Lambda)}$ and $\phi_j^{(\Lambda)}$. That is,
we define
\beq\label{velocity1}
v_{\alpha;i,j}^{(\Lambda)} \equiv
\langle \phi_i^{(\Lambda)}, \nabla_\alpha H_\Lambda \phi_j^{(\Lambda)} \rangle .
\eeq
We now define the {\it finite-volume, current-current correlation measures} for subsets
$\Delta_j \subset \R$ by
\beq\label{ccc1}
M_{\alpha, \beta}^{(\Lambda)} ( \Delta_1, \Delta_2) \equiv \frac{1}{|\Lambda|} \sum_{i,j: E_i (\Lambda) \in \Delta_1;
E_j (\Lambda) \in \Delta_2} v_{\alpha; i,j}^{(\Lambda)}
v_{\beta; j,i}^{(\Lambda)} .
\eeq
These set functions extend to point measures on $\R^2$.
The measures with $\alpha = \beta$, written
$M_\alpha^{(\Lambda)} \equiv M_{\alpha, \alpha}^{(\Lambda)}$, are positive Radon measures on $\R^2$.
We note that $M_{\alpha, \beta}^{(\Lambda)} ( \Delta_1, \Delta_2) =
M_{\beta, \alpha}^{(\Lambda)} ( \Delta_2, \Delta_1)$.
We will denote by $M^{(\Lambda)}$
the positive measure $\sum_{\alpha = 1}^d ~M_{\alpha, \alpha}^{(\Lambda)}$. The existence of the infinite-volume
limits as $\Lambda  \rightarrow \R^d$ of these measures was proved in \cite{[HL],[Pastur0],[Pastur1]}.
We denote the self-averaged, infinite-volume, positive,
Borel ccc-measures by $M_\alpha (dE_1, dE_2)$, $M_{\alpha, \beta} ( dE_1, dE_2)$, and $M(dE_1, dE_2)$.
We remark that these measures can also be defined directly in the
infinite volume using the trace-per-unit volume (see, for example, \cite{[HL]}).

In order to relate the ccc-measures to the conductivity, we need to discuss the Fermi distribution.
The Fermi distribution $n_F(E;T)$ at energy $E$ and temperature $T \geq 0$ is defined for $T>0$ by
\beq\label{fermi1}
n_F (E;T) \equiv (1+ e^{(E-E_F)/ T})^{-1},
\eeq
whereas for $T=0$, it is given by
\beq\label{fermi2}
n_F (E;0) = \chi_{(- \infty, E_F]} (E) .
\eeq
The density matrix $n_F(E_F;H_\omega)$ corresponding to the
$T=0$ Fermi distribution is simply the spectral
projector for $H_\omega$ and the half line $(-\infty, E_F]$.
We will sometimes write $P_{E_F}$ for this projector.

The Kubo formula for the AC conductivity at temperature $T > 0$ and frequency $\nu > 0$
is derived in linear response theory \cite{[BGKS]} (see also \cite{AG,KLM})
by considering a time-dependent Hamiltonian with an external
electric field at frequency $\nu > 0$. This Hamiltonian
can be written as $H_\omega + \mathcal{E} \cdot x \cos \nu t$, although it is more convenient
to study this operator in another gauge. The conductivity
relates the induced current to the electric field $E(x,t) = \mathcal{E} \cdot x \cos \nu t$. A formal
calculation of the current to linear order in the electric field strength $| \mathcal{E}|$ is sketched in section \ref{adiabatic}.
The resulting Kubo formula for the
AC conductivity is:
\bea\label{kubo1}
\lefteqn{ \sigma_{\alpha, \beta}^{AC} ( \nu, T) }  \nonumber \\
 &\equiv &
 \lim_{\epsilon \rightarrow 0}
  \lim_{|\Lambda| \rightarrow \infty}
  \int_{\R^2} \left[ \frac{ n_F(E_1;T) - n_F(E_2;T)}{E_1 -E_2 } \right]
 \delta_\epsilon (E_1 - E_2 + \nu )
 ~M_{\alpha, \beta}^{(\Lambda)} ( dE_1, dE_2)  \nonumber \\
 &= &  - \int_{\R^2}
 ~\left[ \frac{ n_F(E_1;T) - n_F(E_2;T)}{E_1 -E_2 } \right]
~\delta (E_1 - E_2 + \nu  ) ~M_{\alpha, \beta} ( dE_1, dE_2) \nonumber \\
 &=& - \int_{\R}  ~\left[ \frac{ n_F(E + \nu ;T) - n_F(E;T)}{\nu } \right]
 ~M_{\alpha, \beta} ( dE, dE) \ .
\eea
where
\beq\label{delta11}
\delta_\epsilon (s) \equiv  \frac{\epsilon}{ \pi} \frac{1}{ \epsilon^2 + s^2 }.
\eeq
We mention that the ccc-measure has been computed in two models with the aim
of computing the low frequency $ac$ conductivity
\cite{[FP1984],[KP1993]}; see also \cite[section 3.6]{KLM}.

We are interested here in the DC conductivity $\sigma_{\alpha, \beta}^{(DC)}$ at zero temperature.
Formally, this is obtained from (\ref{kubo1}) by taking the limits $\nu \rightarrow 0$
and $T \rightarrow 0$.
We first take $\nu \rightarrow 0$ in (\ref{kubo1}),
to obtain
\beq\label{kubo2}
\sigma_{\alpha, \beta}( 0, T)  = - \int_{\R} \frac{\partial n_F(E;T)}{\partial E}
M_{\alpha, \beta} (dE, dE ) .
\eeq
Next, we take $T \rightarrow 0$. This results in a delta function
$- \delta (E_F - E)$. In order to evaluate the resulting integral,
it is convenient to assume that the infinite-volume
current-current correlation
measure is absolutely continuous with respect to Lebesgue measure on the diagonal
and that it has a locally bounded density $m_{\alpha, \beta}(E,E)$. With this assumption, we obtain
from (\ref{kubo2}),
\beq\label{kubo3}
\sigma_{\alpha, \beta}^{(DC)} (E_F) = m_{\alpha, \beta}(E_F, E_F) .
\eeq
Consequently, the diagonal behavior of the ccc-measure determines the DC conductivity.

In this note, we investigate the existence of the densities $m_{\alpha, \alpha}(E,E)$ and
upper bounds on the rate of vanishing for $E$ in certain energy intervals.
By (\ref{kubo3}), this implies the vanishing of the DC conductivity for energies in these
intervals. The vanishing of the DC conductivity for energies in the localization
regime has been proved by other methods, see section \ref{related1}. We also relate the localization length
to the conductivity, and to the ccc-measure.

We will consider the diagonal behavior of the ccc-measure
at energies in
the complete localization
regime $\Xi^{CL}$.
The complete localization regime is the energy regime discussed by Germinet and Klein \cite{GK2} and
is characterized by strong dynamical localization.
We present precise definitions in the sections below.
We will also discuss a possibly larger energy domain characterized by a finite localization length defined in (\ref{eq:loclength1}).

In order to formulate our results, recall that by the Lebesgue Differentiation
Theorem, the positive measure $M$ has a density at the diagonal point $(E,E)$ if
for any $\epsilon > 0$, with $I_\epsilon (E) = [ E , E + \epsilon ]$, we have that the limit
\beq\label{eq:leb1}
\lim_{\epsilon \rightarrow 0} \frac{ M( I_\epsilon (E), I_\epsilon (E))}{\epsilon^2}
\eeq
exists and is finite. It is necessarily nonnegative.
We say that the resulting density $m (E,E)$ vanishes on the diagonal at the point $(E,E)$ at
a rate given by a function
$g \geq 0$, with $g(s=0) = 0$, if, in addition,
for all $\epsilon >0$ small, we have
\beq\label{defn-rate}
0 \leq \frac{M (I_\epsilon (E), I_\epsilon (E)) }{\epsilon^2} \leq g(\epsilon).
\eeq

The boundedness of the current-current correlation measure on the
diagonal means that the measure has no atoms on the diagonal. For
comparison, the ccc-measure for the free Hamiltonian is easy to
compute due to the fact that the velocity components $\nabla_\alpha
H$ commute with the Hamiltonian. A simple calculation shows that
\beq\label{eq:ccc-free1}
M_{\alpha, \beta} (dE_1, dE_2) = n_0 (E_1)
\delta (E_1 - E_2) \delta_{\alpha, \beta} dE_1 dE_2 ,
\eeq
where
$n_0 (E)$ is the density of states of the free Laplacian at energy
$E$. Consequently, the limit on the diagonal, as described in
(\ref{eq:leb1}) does not exist and the DC conductivity is infinite
at all energies. Boundedness may be obtained simply for energies in
a strong localization regime where (\ref{sl-bounds1}) holds as
stated in Theorem \ref{main1}. If, in addition, a Wegner
estimate (see \cite{CH}, \cite{CHK-2007}) holds, then the DC conductivity vanishes.

We now state our main result on the vanishing of the ccc-density on the diagonal.

\begin{theorem}\label{main1}
Let $E \in \Xi^{CL}$. Then, the current-current correlation measure
density $m(E,E)$ exists and vanishes on the diagonal. If, in addition, the Wegner estimate \eqref{wegner1} holds on $\Xi^{CL}$,
then for any
bounded interval $I_0 \subset \Xi^{CL}$ with $E \in I_0$,
and for any $0 < s < 1$, there is a finite constant $C_{I_0, s} <\infty$ so that
\beq\label{vanish1}
\frac{1}{\epsilon^2}
M (I_\epsilon (E), I_\epsilon (E)) \leq C_{I_0, s} ~\epsilon | \log \epsilon |^{\frac{1}{s}} ,
\eeq
where $I_\epsilon (E) = [ E , E+ \epsilon]$.
\end{theorem}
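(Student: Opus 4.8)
The plan is to reduce to the positive scalar measure $M=\sum_{\alpha}M_{\alpha,\alpha}$, extract the factor $\epsilon^2$ by hand using the structure of the velocity operator, and then feed what survives into the eigenfunction--correlator estimates that characterize the complete localization regime. Work at finite volume and set $P\equiv\chi_{I_\epsilon(E)}(H_\Lambda)$. Since $\nabla_\alpha H_\Lambda=i[H_\Lambda,x_\alpha]$, \eqref{velocity1} gives $v^{(\Lambda)}_{\alpha;i,j}=i\,(E_i(\Lambda)-E_j(\Lambda))\,\langle\phi_i^{(\Lambda)},x_\alpha\phi_j^{(\Lambda)}\rangle$; in particular the diagonal terms $v^{(\Lambda)}_{\alpha;i,i}$ vanish --- which is precisely why the divergence exhibited by the free Laplacian in \eqref{eq:ccc-free1} is absent here --- and $|E_i-E_j|\le\epsilon$ whenever $E_i(\Lambda),E_j(\Lambda)\in I_\epsilon(E)$. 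Using the orthogonality of distinct eigenfunctions to subtract, for $i\neq j$, the $\alpha$-th coordinate $(\xi_i)_\alpha$ of a localization centre $\xi_i$ of $\phi_i^{(\Lambda)}$, one gets for every $\Lambda$
\[
\frac1{\epsilon^2}\,M^{(\Lambda)}(I_\epsilon(E),I_\epsilon(E))\ \le\ \frac1{|\Lambda|}\sum_{\alpha=1}^{d}\ \sum_{\substack{i\neq j\\ E_i(\Lambda),\,E_j(\Lambda)\in I_\epsilon(E)}}\bigl|\bigl\langle\phi_i^{(\Lambda)},\,(x_\alpha-(\xi_i)_\alpha)\,\phi_j^{(\Lambda)}\bigr\rangle\bigr|^2 .
\]
The re-centring replaces the unbounded $x_\alpha$ by a \emph{relative} position --- the quantity that localization can actually control --- and it is the step that removes the divergence one would otherwise meet on passing to infinite volume.

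I would then take the infinite-volume limit of \cite{[HL],[Pastur0],[Pastur1]} in this inequality, so that $\tfrac1{\epsilon^2}M(I_\epsilon(E),I_\epsilon(E))$ is bounded by the corresponding trace per unit volume --- an average localized near a single cell by covariance --- and estimate the latter using the defining features of $\Xi^{CL}$: for any bounded $I_0\subset\Xi^{CL}$ with $E\in I_0$ and any $s<1$, the eigenfunctions of $H_\omega$ with energies in $I_0$ are semi-uniformly localized at rate $|x|^{s}$ and the eigenfunction correlator $\sum_{E_n\in I_0}\|\chi_u\phi_n\|\,\|\chi_v\phi_n\|$ decays like $e^{-c|u-v|^{s}}$, up to a random constant with finite moments. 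Because the re-centring has reduced everything to relative positions, these bounds give that $\tfrac1{\epsilon^2}M(I_\epsilon(E),I_\epsilon(E))$ is \emph{finite}, with a bound depending only on $I_0$; in particular $M$ has no atoms on the diagonal. To upgrade this to \emph{vanishing} one uses that only eigenfunctions with energies in the shrinking window $I_\epsilon(E)$ contribute, so that the bounding trace tends to $0$ as $\epsilon\to0$ by dominated convergence: $\chi_{I_\epsilon(E)}(H_\omega)$ shrinks to $\chi_{\{E\}}(H_\omega)=0$, the integrated density of states being continuous, while the $I_0$-bounds supply a uniform integrable majorant. Thus the limit in \eqref{eq:leb1} exists and equals $0$, which is the first assertion.

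For the quantitative bound I would make that last step effective. The Wegner estimate \eqref{wegner1} renders the integrated density of states locally Lipschitz on $I_0$, so the density-of-states increment over $I_\epsilon(E)$ is $O(\epsilon)$; more to the point, it allows one to extract a genuine positive power of $|I_\epsilon|$ from the several eigenfunction-correlator factors occurring in the estimate (via H\"older's inequality, each correlator contributing a fractional power of $|I_\epsilon|$), while the passage from random localization constants to deterministic ones, together with the absorption of the cell weights $\langle u-\xi_i\rangle$ into the sub-exponential decay, costs a power of $|\log\epsilon|$ --- the balancing cut-off scale being $R\sim(\,\mathrm{const}\cdot|\log\epsilon|\,)^{1/s}$. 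Carrying this out and summing over $\alpha$ gives $\tfrac1{\epsilon^2}M^{(\Lambda)}(I_\epsilon(E),I_\epsilon(E))\le C_{I_0,s}\,\epsilon\,|\log\epsilon|^{1/s}$ uniformly in $\Lambda$, and then \eqref{vanish1} follows on passing to the infinite-volume limit.

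The hard part is controlling the double sum in the display --- equivalently, the Hilbert--Schmidt norm of $P\nabla_\alpha H\,P$ localized to a unit cell --- \emph{uniformly in the volume}, and the two naive routes both fail. Estimating the inner sum by Bessel's inequality and dropping the energy restriction on the second index reintroduces the number of eigenvalues of $H_\Lambda$ in $I_\epsilon(E)$, which grows like $\epsilon|\Lambda|$; keeping the restriction but applying the semi-uniform localization bound in its raw form brings in a prefactor $\langle\xi_i\rangle^{p}$ attached to the localization centre $\xi_i$, of size a power of $|\Lambda|$ for eigenfunctions deep inside the box. The device that overcomes both --- the technical core of the proof --- is the re-centring of each $x_\alpha$ (at a localization centre, or, when one works with the kernel $\langle\chi_x,P\nabla_\alpha H\,P\,\chi_y\rangle$, at the midpoint $\tfrac12(x+y)$ of the two spatial arguments) together with the use of the eigenfunction \emph{correlator} in place of individual eigenfunction bounds, so that only pairs whose centres lie within $O(R)$ of one another and of the reference cell survive; the Wegner estimate is then what makes the surviving sum finite and of order $\epsilon|\log\epsilon|^{1/s}$.
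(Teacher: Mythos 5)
Your opening move coincides with the paper's: writing $\nabla_\alpha H=i[H-E,x_\alpha]$ and using the spectral restriction to $I_\epsilon(E)$ to extract the factor $(E_i-E_j)^2\le\epsilon^2$ (the paper does this in infinite volume via $f_\epsilon(s)=(s-E)\chi_{I_\epsilon(E)}(s)/\epsilon$ rather than through eigenvalues). From there you branch into a finite-volume eigenfunction/correlator argument, whereas the paper never touches eigenfunctions or localization centres: it stays with the trace per unit volume and uses only the Hilbert--Schmidt characterization \eqref{sl-bounds1} of $\Xi^{CL}$. Your route is plausible for the qualitative assertion, but it silently requires two inputs you do not supply: a SUDEC-type bound with \emph{summable} coefficients, uniform in $\Lambda$, to defeat the $\langle\xi_i\rangle^{p}$ prefactors you correctly flag (the raw re-centred Bessel bound $\sum_{j}|\langle\phi_i,(x_\alpha-(\xi_i)_\alpha)\phi_j\rangle|^2\le\|(x_\alpha-(\xi_i)_\alpha)\phi_i\|^2$ is useless without uniform control of the right side), and continuity of the IDS at $E$ for your dominated-convergence step. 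Also, passing the finite-volume inequality through the weak limit $M^{(\Lambda)}\to M$ for the closed rectangle $I_\epsilon(E)\times I_\epsilon(E)$ needs a word; the paper sidesteps this by defining everything in infinite volume from the start.

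The genuine gap is in the quantitative bound \eqref{vanish1}. The entire difficulty is to convert the Wegner factor $\mathcal{T}(P_\epsilon)=O(\epsilon)$ into a full power $\epsilon^{1}$ rather than $\epsilon^{1/2}$: a single Cauchy--Schwarz (in $\omega$, or with respect to $\mathcal{T}$) to decouple the eigenvalue count from the random localization constants inevitably yields only $\mathcal{T}(P_\epsilon)^{1/2}$. You assert that ``H\"older's inequality, each correlator contributing a fractional power of $|I_\epsilon|$'' together with a cutoff at $R\sim|\log\epsilon|^{1/s}$ produces $\epsilon|\log\epsilon|^{1/s}$, but this is precisely the step that must be carried out, and the mechanism you describe does not deliver the stated exponent: the deterministic truncation $\|(x_\alpha-\xi_i)\chi_{\{|x-\xi_i|\le R\}}\phi_i\|^2\le R^2$ costs $R^2\sim|\log\epsilon|^{2/s}$, not $|\log\epsilon|^{1/s}$, and making the tail $e^{-cR^s}$ negligible fixes $R$ at that scale. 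The paper's device is an \emph{iterated} Cauchy--Schwarz for $\mathcal{T}$, as in \eqref{boundI2}: after $n$ iterations one has $\mathcal{T}(P_\epsilon)^{1-2^{-n}}$ multiplied by the $2^{n}$-th root of a moment of order $2^{n}$ of the position operator sandwiched between localized spectral functions; the kernel decay \eqref{sl-bounds1} bounds that moment by a Gamma function as in \eqref{sl-bound12}, Stirling turns the root into a power of $2^{n}$, and the choice $2^{n}\sim|\log\epsilon|$ balances $\epsilon^{-2^{-n}}=O(1)$ against the polylog. Without this interpolation (or an equivalent bookkeeping of how the constants blow up with the number of H\"older factors), your argument establishes at best $\epsilon^{1/2}$, or $\epsilon|\log\epsilon|^{2/s}$ by the cutoff route, with the claimed rate left unproved.
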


We note that if, in addition to the Wegner estimate and
the other hypotheses of Theorem \ref{main1}, we know that Minami's estimate
is satisfied in a neighborhood of a given energy, then
the rate of vanishing can be improved for certain intervals near the
diagonal (see Corollary \ref{cor:mott1}).
We discuss this and some generalizations in section \ref{sec:main1-proof} after the proof of the main theorem.

\begin{remark}
Suppose that $E_0$ is a lower band edge at which the IDS exhibits a
Lifshitz tail behavior. For example, if $E_0$ is a band edge of the
deterministic spectrum $\Sigma$ and $E > E_0$, the existence of
Lifshitz tails \cite{Ghribi,Klopp} means that for $E$ sufficiently
close to $E_0$, the IDS $N(E)$ satisfies
\beq\label{ids1}
N(E)-N(E_0) \leq C_{E_0}e^{- \frac{\alpha}{(E-E_0)^{d/2}}}.
\eeq
We write $E(I_\epsilon(E_0))$ for the spectral projector for $H_\omega$ and
the interval $I_\epsilon(E_0)$. It follows from (\ref{ids1}) that for $\epsilon > 0$ small enough,
we have
\beq\label{wegner2}
\E \{ {\rm Tr} \chi_0 E(I_\epsilon
(E_0)) \chi_0 \} \leq C_{E_0} e^{- \frac{\alpha}{\epsilon^{d/2}}},
\eeq
for some constant $\alpha >0$ and where $\chi_0$ is the
characteristic function for the unit cube about zero. We use this
estimate in (\ref{boundI2}) below in place of the usual Wegner
estimate (\ref{wegner1}). As a consequence, we obtain an exponential
rate of vanishing on the diagonal. There is a finite, constant $D_0
> 0$ so that for all $\epsilon >0$ small enough, \beq\label{vanish2}
\frac{1}{\epsilon^2} M (I_\epsilon (E_0), I_\epsilon (E_0)) \leq D_0
e^{- \alpha / \epsilon^{d/2} } . \eeq Of course, this only occurs at
a countable set of energies (the lower band edges). This result, however, stating that,
roughly,  $m(E + \epsilon, E)$ vanishes at an exponentially fast
rate as $\epsilon \rightarrow 0$, is consistent with the Mott theory
of conductivity.
\end{remark}

\begin{remark}
The optimal rate of vanishing in Theorem \ref{main1} is not known. We note, however,
the relationship between the localization length (see section \ref{sec:loclength1})
and the ccc-measure \cite[section V]{BvESB}:
\beq\label{locallength0}
\ell^2 (\Delta) = 2 \int_{\Delta \times \R} ~ \frac{ M (d E_1, d E_2 )}{( E_1 - E_2)^2} ,
\eeq
from which it follows that the ccc-density must vanish faster that $\mathcal{O}(\epsilon)$ on the diagonal
if the localization length is to be finite. We provide a proof of (\ref{locallength0}) as part of Proposition \ref{locallength1}
in section \ref{sec:loclength1}.
\end{remark}

This leads us to our second result relating the localization length $\ell (\Delta)$, for an interval $\Delta \subset \R$,
to the ccc-measure. The localization length bounds the ccc-measure and the vanishing of the localization length implies
the vanishing of the DC conductivity. For this result, we do not need the hypothesis of complete localization.

\begin{theorem}\label{thm:loc-length1}
For $E \in \R$,
and for any $\epsilon > 0$ small, define the interval
$I_\epsilon (E) \equiv [E , E + \epsilon]$.
Suppose that there is a constant $0 \leq M_E < \infty$ so that the localization
length $\ell ( I_\epsilon (E)) \leq M_E < \infty $, for all $\epsilon > 0$ small. Then,
the density of the ccc-measure $m(E, E)$ exists on the diagonal. If, in addition,
 $\ell (I_\epsilon (E)) \rightarrow 0$ as $\epsilon
\rightarrow 0$, then we have
\beq\label{eq:vanish01}
\lim_{\epsilon \rightarrow 0} \frac{M (I_\epsilon (E), I_\epsilon (E))}{\epsilon^2} = 0.
\eeq
Consequently, $\sigma^{(DC)} (E) = 0$.
\end{theorem}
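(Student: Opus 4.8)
The plan is to derive both assertions from the localization-length identity (\ref{locallength0}),
\[
\ell^2(\Delta)=2\int_{\Delta\times\R}\frac{M(dE_1,dE_2)}{(E_1-E_2)^2},
\]
which is established as part of Proposition \ref{locallength1} and which I take as given. The elementary observation is that on the square $I_\epsilon(E)\times I_\epsilon(E)$ one has $|E_1-E_2|\le\epsilon$, so $(E_1-E_2)^{-2}\ge\epsilon^{-2}$ there. Since $M$ is a positive measure and $I_\epsilon(E)\times I_\epsilon(E)\subset I_\epsilon(E)\times\R$, this yields
\[
\frac{1}{\epsilon^2}\,M\bigl(I_\epsilon(E),I_\epsilon(E)\bigr)
\;\le\;\int_{I_\epsilon(E)\times I_\epsilon(E)}\frac{M(dE_1,dE_2)}{(E_1-E_2)^2}
\;\le\;\int_{I_\epsilon(E)\times\R}\frac{M(dE_1,dE_2)}{(E_1-E_2)^2}
\;=\;\frac{1}{2}\,\ell^2\bigl(I_\epsilon(E)\bigr),
\]
and I write $G(\epsilon)$ for the middle quantity.

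If $\ell(I_\epsilon(E))\to0$ as $\epsilon\to0$, then the right-hand side of the chain tends to $0$, so (\ref{eq:vanish01}) follows at once and $\sigma^{(DC)}(E)=m(E,E)=0$ by (\ref{kubo3}). For the first assertion I would argue that the uniform bound $G(\epsilon)\le\frac{1}{2}M_E^2<\infty$ already forces the limit in (\ref{eq:leb1}) to exist. Fix a small $\epsilon_0$ with $G(\epsilon_0)<\infty$. Because $(E_1-E_2)^{-2}=+\infty$ on the diagonal, finiteness of $G(\epsilon_0)$ forces $M$ to carry no mass on the diagonal segment $\{(t,t):t\in I_{\epsilon_0}(E)\}$; in particular $M$ has no atom at $(E,E)$. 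Then $(E_1-E_2)^{-2}$ is $M$-integrable on $I_{\epsilon_0}(E)\times I_{\epsilon_0}(E)$, the indicators $\chi_{I_\epsilon(E)\times I_\epsilon(E)}$ decrease pointwise to $\chi_{\{(E,E)\}}$ as $\epsilon\to0$, and $\chi_{\{(E,E)\}}$ vanishes $M$-almost everywhere; so $G(\epsilon)\to0$ by dominated convergence. The sandwich above then gives $\lim_{\epsilon\to0}\epsilon^{-2}M(I_\epsilon(E),I_\epsilon(E))=0$, so $m(E,E)$ exists — and in fact already vanishes, so the extra hypothesis of the second assertion only buys the one-line argument for (\ref{eq:vanish01}).

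The single point beyond bookkeeping is the passage from boundedness of $G(\epsilon)$ to the genuine existence of the limit in (\ref{eq:leb1}), that is, the absence of an atom of $M$ on the diagonal at $(E,E)$. This is exactly where the hypothesis is used: one assumes finiteness of the localization-length integral $\int_{I_\epsilon(E)\times\R}(E_1-E_2)^{-2}\,M$, not merely finiteness of the ccc-measure. The free-Laplacian computation (\ref{eq:ccc-free1}), in which $M$ is carried entirely by the diagonal, shows what goes wrong without this: there $G(\epsilon)\equiv+\infty$, the limit in (\ref{eq:leb1}) does not exist, and $\sigma^{(DC)}$ is infinite at every energy. Apart from this observation, the proof is simply the chain of inequalities above combined with (\ref{locallength0}).
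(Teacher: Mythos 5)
Your proof is correct, and its backbone --- the chain
$\epsilon^{-2}M(I_\epsilon(E),I_\epsilon(E))\le\int_{I_\epsilon(E)\times I_\epsilon(E)}(E_1-E_2)^{-2}\,M(dE_1,dE_2)\le\tfrac12\,\ell^2(I_\epsilon(E))$,
obtained from (\ref{locallength0}) by noting that $(E_1-E_2)^2\le\epsilon^2$ on the square --- is exactly the paper's argument: it is displayed as (\ref{eq:loc-lbd1}) and is the content of the bound (\ref{eq:ccc-loc1}) in Proposition \ref{locallength1}. Where you genuinely diverge is on the first assertion. The paper infers ``the density $m(E,E)$ exists'' directly from the uniform bound $\epsilon^{-2}M(I_\epsilon(E),I_\epsilon(E))\le\tfrac12 M_E^2$, which strictly speaking only controls the $\limsup$ in (\ref{eq:leb1}), not the existence of the limit. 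Your dominated-convergence step closes that gap: finiteness of $G(\epsilon_0)$ rules out any atom of $M$ at $(E,E)$, the sets $I_\epsilon(E)\times I_\epsilon(E)$ decrease to $\{(E,E)\}$, and continuity from above of the finite measure $\chi_{I_{\epsilon_0}(E)\times I_{\epsilon_0}(E)}(E_1-E_2)^{-2}\,M(dE_1,dE_2)$ forces $G(\epsilon)\to0$. The sandwich then yields not just existence but $m(E,E)=0$ under the boundedness hypothesis alone, so --- as you observe --- the extra hypothesis $\ell(I_\epsilon(E))\to0$ is not actually needed for (\ref{eq:vanish01}); this is a strictly stronger conclusion than the theorem states, and I see no error in it. The one caveat, which applies equally to the paper's version, is that everything rests on the identity (\ref{locallength0})/(\ref{eq:ccc-loc2}), whose derivation in Proposition \ref{locallength1} is a formal spectral-family computation; granting that, your argument is complete.
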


Finally, we mention the open question of the existence of a density $m(E_1, E_2)$
for $E_1 \neq E_2$ that we refer to as the off-diagonal case. This is important in
view of the formula for the localization length (\ref{locallength0}), and in order to control the density in a neighborhood of the diagonal.
Unfortunately, the approach developed in this note does not seem to allow us to control the off-diagonal
behavior.
We mention that
for lattice models on $\ell^2 ( \Z^d)$, Bellissard and Hislop \cite{BH} proved
the existence of a density $m_\alpha (E_1, E_2)$
at off-diagonal energies $(E_1, E_2)$ provided that 1) the energies lie in
a region away from the diagonal $E_1 = E_2$ determined by the strength of the disorder,
and 2) the density of the single-site probability
measure extends to a function analytic in a strip around the real axis.


\section{Existence and Vanishing of the Density on the Diagonal in the Complete Localization Regime}
\label{sec:main1-proof}

In this section, we give a simple proof of Theorem \ref{main1} on the existence of, and vanishing of, the
ccc-density on the diagonal $m(E,E)$ when $E \in \Xi^{CL}$. We denote by $C$ a generic, nonnegative, finite
constant whose value may change from line to line. We work in the infinite-volume framework
using the trace-per-unit volume and refer the reader to \cite{[BGKS]} and \cite{BvESB} for a complete discussion.
Let $(\Omega, \Pb)$ be a probability space with a $\Z^d$-ergodic action $\tau_a: \Omega \rightarrow
\Omega$, $a \in \Z^d$. Let $a \in \Z^d \rightarrow U_a$ be a unitary representation of $\Z^d$
on $\mathcal{H}$. A {\it covariant operator} $A$ on a separable Hilbert space $\mathcal{H}$ is a $\Pb$-measurable, operator-valued
function $A = \{ A_\omega ~|~ \omega \in \Omega \}$ such that for $a \in \Z^D$,
we have $(U_a A U_{-a})_\omega = A_{\tau_{-a} \omega}$. Let $\chi_0$ be the characteristic function
on the unit cube centered at the origin. The {\it trace-per-unit volume}
of a covariant operator $A$, denoted by $\mathcal{T}(A)$, is given by
\beq\label{eq:trace-per-unit1}
\mathcal{T}(A) \equiv \E \{ {\rm Tr} ( \chi_0 A \chi_0 ) \},
\eeq
when it is finite.

We recall that for $\Delta_j \subset \R$, with $j=1,2$, the positive ccc-measure $M_\alpha$
is given by
\bea\label{density1}
M_\alpha
( \Delta_1, \Delta_2) & =  & \mathcal{T} ( \nabla_\alpha H E(\Delta_1) \nabla_\alpha H E(\Delta_2) ) \nonumber \\
   &=& \lim_{|\Lambda| \rightarrow \infty} \frac{1}{|\Lambda|}
   ~{\rm Tr} \chi_\Lambda \nabla_\alpha H E(\Delta_1) \nabla_\alpha H E(\Delta_2) \chi_\Lambda  .
\eea
We recall from \cite[section 3.2]{[BGKS]} that $\mathcal{K}_2$ is the space of all measurable,
covariant operators so that
\beq\label{eq:k2}
\E \{ \| A \chi_0 \|_2^2 \} < \infty.
\eeq
We refer to \cite{[BGKS]} for further definitions and details.
Because the spectral projector $E(\Delta) \in \mathcal{K}_2$, the cyclicity of $\mathcal{T}$
permits us to write
\beq\label{cyclic}
\mathcal{T} (\nabla_\alpha H E(\Delta_1) \nabla_\alpha H E(\Delta_2) )
= \mathcal{T} ( E(\Delta_2) \nabla_\alpha H E(\Delta_1) \nabla_\alpha H E(\Delta_2) ) .
\eeq

Let $\chi_x$ be a bounded function of compact support in a neighborhood of $x \in \R^d$.
Let $\| \cdot \|_p$ denote the $p^{th}$-trace norm for $p \geq 1$. We give a characterization of the
region $\Xi^{CL}$ of complete localization given in
\cite{GK1} and \cite{GK2}.

\begin{definition}\label{stronglocreg1}
An energy $E \in \Sigma$ belongs to $\Xi^{CL}$ if there exists a
open neighborhood $I_E$ of $E$ such that the following bound holds.
Let $\mathcal{I}_E$ denote all functions
$f \in L_0^\infty (\R)$, with $\supp f \subset I_E$ and $\sup_{t \in \R} |f (t)| \leq 1$.
Then, for any
$0 < s < 1$, there is a finite constant $C_s > 0$ so that
the Hilbert-Schmidt bound holds
\beq\label{sl-bounds1}
\sup_{f \in \mathcal{I}_E} \E \{ \| \chi_x f (H_\omega ) \chi_y  \|_2^2 \} \leq C_s
e^{ - \|x-y\|^s}, ~\mbox{for all} ~x,y \in \R^d.
\eeq
\end{definition}

\begin{remark}\label{fmm1}
This estimate can be improved to $s=1$ by the method of fractional moments
\cite{{AENSS},{AM1}}.
\end{remark}

\begin{proof}[Proof of Theorem~\ref{main1}]
\noindent
1. We begin with formula (\ref{cyclic}) and take $\Delta _1
= \Delta_2 = I_\epsilon (E) = [E, E+ \epsilon]$. We write the velocity operator as
a commutator
\beq\label{commutator1}
\nabla_\alpha H =  i [ H, x_\alpha ] = i [ H-E, x_\alpha] .
\eeq
For brevity, we write $P_\epsilon \equiv E(I_\epsilon (E))$. We then have
\bea\label{density2}
M_\alpha ( I_\epsilon (E), I_\epsilon (E)) &=& \mathcal{T} ( P_\epsilon \nabla_\alpha H P_\epsilon
\nabla_\alpha H P_\epsilon ) \nonumber \\
 & = & - \mathcal{T} ( P_\epsilon [ H-E,x_\alpha ] P_\epsilon
[ H-E, x_\alpha ]  P_\epsilon ) .
\eea
Next, we expand the commutators and obtain four terms involving $P_\epsilon$.
This is facilitated by the introduction of a
a function $f_\epsilon (s) \equiv (s - E) \chi_{I_\epsilon (E)} (s) / \epsilon$, so that
$| f_\epsilon | \le 1$. We obtain a factor of $\epsilon$ from each projector so that
\beq\label{density21}
M_\alpha ( I_\epsilon (E), I_\epsilon (E)) = - \epsilon^2 (I + II + III + IV ) ,
\eeq
where we have:
\beq\label{i1}
I   =   \mathcal{T} ( f_\epsilon (H) x_\alpha f_\epsilon (H) x_\alpha
 P_\epsilon ) ;
 \eeq
 \beq\label{iII}
 II = - \mathcal{T} ( f_\epsilon (H) x_\alpha P_\epsilon x_\alpha f_\epsilon (H) ) ;
 \eeq
 \beq\label{iIII}
 III = - \mathcal{T} ( P_\epsilon x_\alpha f_\epsilon^2 (H) x_\alpha P_\epsilon ) ;
\eeq
\beq\label{iIV}
IV = \mathcal{T} ( P_\epsilon x_\alpha f_\epsilon (H) x_\alpha f_\epsilon (H) ).
\eeq

\noindent
2. In order to estimate these four terms for $E \in \Xi^{CL}$, we need the following bounds.
Let $F_{\alpha, \epsilon}(H_\omega)$ denote the operator $x_\alpha f_\epsilon (H_\omega) x_\alpha$.
The bound in Definition \ref{stronglocreg1} implies the following estimate for any integer $q > 0$ and $s$
as in definition \ref{stronglocreg1}:
\beq\label{sl-bound12}
\sup_{f \in I_\epsilon} \E \{ \| (F_{\alpha, \epsilon}(H_\omega) )^{q/2} f(H_\omega) \chi_0 \|_2^2 \} \leq
{C}(s,q, d) \Gamma (( q + d )/s)^2 ,
\eeq
where $\Gamma (t)$ is the gamma function and ${C}(s,q, d)$ depends on $s$ through
the constants $C_s$ and $\alpha_s$ as in (\ref{sl-bounds1}) and grows linearly in
$q$. To prove this bound, we use a partition of unity
on $\R^d$ given by $\sum_{k \in \Z^d} \chi_k = 1$ and we define operators
$A_{k \ell} \equiv \chi_k f(H_\omega) \chi_\ell$ and $B_{k \ell} \equiv
\chi_k (x_\alpha f(H_\omega) x_\alpha )^q \chi_\ell$.
We then estimate the Hilbert-Schmidt norm in (\ref{sl-bound12}) by
\bea\label{eq:sl-bound13}
\| (F_{\alpha, \epsilon}(H_\omega) )^{q/2} f(H_\omega) \chi_0 \|_2^2 &=&
\| \chi_0 f(H_\omega) (F_{\alpha, \epsilon}(H_\omega) )^{q} f(H_\omega) \chi_0 \|_1 \\
&\leq& \sum_{m,k \in \Z^d} \| A_{0 m} \|_2 \| A_{\ell 0} \|_2 \| B_{m \ell} \| .
\eea
Since $\| B_{m \ell} \| \leq C (q) \|m\|^q \| \ell \|^q$, where $C(q)$
is linear in $q$ for integer $q$,
we obtain from (\ref{sl-bounds1}) and (\ref{eq:sl-bound13}),
and the integral representation of the gamma function
\bea\label{eq:sl-bound14}
\E \{ \| (F_{\alpha, \epsilon}(H_\omega) )^{q/2} f(H_\omega) \chi_0 \|_2^2 \}
&\leq & C(s) \left( \sum_{m \in \Z^d} |m|^q e^{-\alpha_s \|m\|^s } \right)^2 \\
 &\leq & C(s,q, d) \Gamma ( (q+ d)/s)^2,
\eea
proving the estimate (\ref{sl-bound12}). We will also use the operator bounds
\beq\label{sl-bound2}
\| f (H) \| \leq 1, ~~|f | \leq 1 ,
\eeq
and by a simple trace class estimate,
\beq\label{sl-bound3}
\mathcal{T} ( P_\epsilon ) \leq C,
\eeq
for some finite constant $C > 0$.

\noindent
3. We now bound the term $I$ in (\ref{i1}) as follows, recalling that $| f_\epsilon | \leq 1$
and that $P_\epsilon f_\epsilon (H_\omega) = f_\epsilon (H_\omega)$:
\bea\label{boundI}
|I| &\leq & | \E \{ {\rm Tr } \chi_0 f_\epsilon (H) x_\alpha f_\epsilon (H) x_\alpha
 P_\epsilon \chi_0 \} | \nonumber \\
  & \leq & \mathcal{T} (P_\epsilon)^{1/2} ~( \E \{ \| F_{\alpha, \epsilon} (H_\omega)
  f_\epsilon (H_\omega) \chi_0 \|_2^2  \} )^{1/2} \nonumber \\
  & = & \mathcal{T} (P_\epsilon)^{1/2} \{ \mathcal{T} ( f_\epsilon (H_\omega) F_{\alpha, \epsilon} (H_\omega)^2
  f_\epsilon (H_\omega) P_\epsilon  )  \}^{1/2} \nonumber \\
  & \leq & \mathcal{T} (P_\epsilon)^{1/2 + 1/4} ~( \E \{ \| F_{\alpha, \epsilon} (H_\omega)^2
  f_\epsilon (H_\omega) \chi_0 \|_2^2  \} )^{1/4}
\eea
Bounds (\ref{sl-bound12}), together with (\ref{sl-bound2})-(\ref{sl-bound3}),
show that the term in (\ref{boundI}) is uniformly bounded as $\epsilon
\rightarrow 0$. It is clear that terms $II-IV$ are bounded in a similar manner.
As a consequence, it follows from these bounds and
(\ref{density21}) that
\beq\label{density3}
\lim_{\epsilon \rightarrow 0 } \frac{M_\alpha ( I_\epsilon (E), I_\epsilon (E))}{\epsilon^2} \leq C < \infty .
\eeq
This implies that the density $m(E,E)$ exists for $E \in \Xi^{CL}$.

\noindent
4. The rate of vanishing can be calculated with more careful estimates. We use the
Wegner estimate in place of (\ref{sl-bound3}) in order to obtain more powers of $\epsilon$. The Wegner estimate
for infinite-volume operators has the form
\beq\label{wegner1}
\mathcal{T} ( E(J) ) = \E \{ {\rm Tr} \chi_0 E(J) \chi_0 \} \leq C_0 |J| ,
\eeq
for any subset $J \subset \R$. Turning to term $I$ in (\ref{boundI}), we iterate the
argument $n$ times, recalling that $P_\epsilon f_\epsilon (H_\omega) = f_\epsilon (H_\omega)$,
and obtain
\beq\label{boundI2}
|I| \leq  \mathcal{T} (P_\epsilon)^{\frac{1}{2} + \ldots +\frac{1}{2^n}} ~\left(
\E \{ \| ( F_{\alpha, \epsilon} (H_\omega))^{2^{n-1}}
  f_\epsilon (H_\omega) \chi_0 \|_2^2  \} \right)^{\frac{1}{2^n}} .
\eeq
We use the strong localization bound (\ref{sl-bound12}) in (\ref{boundI2}), together with the Wegner estimate (\ref{wegner1}) for $I_\epsilon
(E)$, and Stirling's formula for the gamma function, to obtain
\beq\label{Ibound3}
|I| \leq C (s, d) \epsilon^{1 - \frac{1}{2^n}} ~(2^n)^{\frac{1}{s}} .
\eeq
We choose $n$ so that $2^n \sim | \log \epsilon|$ producing the upper bound
$\epsilon |\log \epsilon |^{\frac{1}{s}}$.
The remaining terms can be estimated in a similar manner. \end{proof}



\begin{remark}
We note that if we consider intervals of the form
\beq\label{eq:interval1}
I_\epsilon = [E,E+\epsilon] \times [E-\epsilon,E] \subset \R^2,
\eeq
we can improve the rate of vanishing. This is a consequence of the analysis in \cite{KLM} in the context of Mott's formula for the AC
conductivity. It
requires the Minami estimate in addition to the Wegner estimate.
We recall that Minami's estimate for local operators $H_\omega^\Lambda$ is a second-order correlation estimate.
We assume that the single-site probability measure has a bounded density $\rho$. Let $E_\Lambda ( \Delta)$
denote the spectral projector for $H_\omega^\Lambda$ and the interval $\Delta$. The Minami estimate is:
\beq\label{minami}
\E \left\{{\rm Tr}  E_\Lambda (\Delta) ( {\rm Tr} E_\Lambda (\Delta) -1 )  \right\}
\le (\|\rho\|_\infty |\Delta| |\Lambda|)^2.
\eeq
This was proved by Minami \cite{[Mi]} for lattice models on $\Z^d$, and recent, simplified proofs
have appeared in \cite{[BHS],[CGK],[GV]}. More recently, the Minami estimate
was proved in \cite{CGK2} for certain Anderson models in the continuum for an interval
of energy near the bottom of the deterministic spectrum.

\begin{corollary}\label{cor:mott1}
For any energy $E \in \Xi^{CL}$ for which both the Wegner and Minami estimates hold in an interval $I_0$ containing $E$, we have
\beq\label{eq:mott2}
\frac{1}{\epsilon^2}
M([E,E+\epsilon],[E-\epsilon,E]) \leq C_{I_0} ~\epsilon^2 | \log \epsilon |^{d+2} .
\eeq

\end{corollary}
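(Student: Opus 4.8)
The two spectral windows $[E,E+\epsilon]$ and $[E-\epsilon,E]$ in \eqref{eq:interval1} overlap only in the single point $E$, so the plan is to reproduce, in the notation of Section \ref{cccm-intro}, the estimate that underlies Mott's formula for the $ac$-conductivity in \cite{KLM}. I will use the \emph{exponential} ($s=1$) form of \eqref{sl-bounds1}, available by the fractional moment method in the settings where \eqref{minami} is known (Remark \ref{fmm1}). Work in finite volume, bound $\E\{M^{(\Lambda)}(I_\epsilon)\}$ uniformly in $\Lambda$, and let $|\Lambda|\to\infty$ at the end. Writing $\nabla_\alpha H_\Lambda = i[H_\Lambda,x_\alpha]$ and using $H_\Lambda\phi_m^{(\Lambda)} = E_m(\Lambda)\phi_m^{(\Lambda)}$ gives $v_{\alpha;i,j}^{(\Lambda)} = i\,(E_i(\Lambda)-E_j(\Lambda))\,\langle\phi_i^{(\Lambda)},x_\alpha\phi_j^{(\Lambda)}\rangle$, so $|v_{\alpha;i,j}^{(\Lambda)}|\le 2\epsilon\,|\langle\phi_i^{(\Lambda)},x_\alpha\phi_j^{(\Lambda)}\rangle|$ whenever $E_i(\Lambda)\in[E,E+\epsilon]$ and $E_j(\Lambda)\in[E-\epsilon,E]$, and the terms with $i=j$ vanish. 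This yields the first factor $\epsilon^2$ in \eqref{eq:mott2}; the second factor $\epsilon^2$ and the logarithmic loss come from a spatial analysis of the matrix elements in which the Minami estimate \eqref{minami}, rather than just the Wegner estimate, is decisive.

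Fix $R = R(\epsilon) := K\,|\log\epsilon|$ with $K$ large (depending on $d$ and the localization constants), attach to each $\phi_m^{(\Lambda)}$ a localization centre $x_m(\Lambda)\in\Lambda$ as in \cite{GK1,GK2,KLM}, and split the sum defining $M_\alpha^{(\Lambda)}([E,E+\epsilon],[E-\epsilon,E])$ into \emph{near} pairs, $|x_i(\Lambda)-x_j(\Lambda)|\le R$, and \emph{far} pairs, $|x_i(\Lambda)-x_j(\Lambda)|>R$. For the far pairs, recentring $x_\alpha$ by $(x_i(\Lambda))_\alpha$ (legitimate since $\phi_i^{(\Lambda)}\perp\phi_j^{(\Lambda)}$) and summing the exponentially small correlations provided by \eqref{sl-bounds1} in its SUDEC form show that $\frac{1}{|\Lambda|}\E\{\sum_{\mathrm{far}}|v_{\alpha;i,j}^{(\Lambda)}|^2\}$ decays faster than any power of $\epsilon$; this is a routine computation of the kind carried out in \cite{KLM,[BGKS]}, and is negligible for our choice of $R$.

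For a near pair, recentring again gives $|\langle\phi_i^{(\Lambda)},x_\alpha\phi_j^{(\Lambda)}\rangle| = |\langle\phi_i^{(\Lambda)},(x_\alpha-(x_i(\Lambda))_\alpha)\phi_j^{(\Lambda)}\rangle|\le C R$, using $|x_i(\Lambda)-x_j(\Lambda)|\le R$ and the exponential decay of $\phi_j^{(\Lambda)}$ about its centre, so $|v_{\alpha;i,j}^{(\Lambda)}|^2\le C\epsilon^2 R^2$. Partition $\Lambda$ into boxes $\{B_m\}$ of side $2R$; each near pair has $x_i(\Lambda)\in B_m$ for a unique $m$ and $x_j(\Lambda)\in B_{m'}$ for some $m'$ adjacent to (or equal to) $m$, hence
\[
\frac{1}{|\Lambda|}\,\E\Big\{\textstyle\sum_{\mathrm{near}}|v_{\alpha;i,j}^{(\Lambda)}|^2\Big\}
\;\le\; \frac{C\epsilon^2 R^2}{|\Lambda|}\sum_{m}\ \sum_{m'\sim m}\E\bigl\{N_{B_m}^\Lambda([E,E+\epsilon])\,N_{B_{m'}}^\Lambda([E-\epsilon,E])\bigr\},
\]
where $N_B^\Lambda(\Delta)$ counts the eigenfunctions of $H_\Lambda$ with centre in $B$ and eigenvalue in $\Delta$. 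Since $[E,E+\epsilon]$ and $[E-\epsilon,E]$ are disjoint, the product on the right counts \emph{distinct} ordered pairs of eigenvalues of $H_\Lambda$ lying in $J_\epsilon := [E-\epsilon,E+\epsilon]$ with centres in $B_m\cup B_{m'}$; comparing these, up to a finite-volume localization error super-exponentially small in $R$ (hence negligible after dividing by $|\Lambda|$), with the eigenvalues of the truncated operator on a box $\widehat B$ of side $\sim R$ around $B_m\cup B_{m'}$, and invoking \eqref{minami} with $|J_\epsilon|=2\epsilon$, bounds each expectation by $\frac{1}{2}(\|\rho\|_\infty\,|J_\epsilon|\,|\widehat B|)^2\le C\epsilon^2 R^{2d}$. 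With $\sim|\Lambda|/R^d$ boxes and a bounded number of neighbours per box, the double sum is $\le C\epsilon^2|\Lambda|\,R^{d}$, so the near contribution is at most $C\epsilon^4 R^{d+2} = CK^{d+2}\,\epsilon^4|\log\epsilon|^{d+2}$. Adding the near and far contributions, summing over $\alpha=1,\dots,d$, and letting $|\Lambda|\to\infty$ gives \eqref{eq:mott2}.

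The main obstacle is precisely the finite-volume localization bookkeeping used above: constructing localization centres, establishing the decay and summation estimates uniformly over $\epsilon$ small and over the bounded interval $I_0\subset\Xi^{CL}$, and — most importantly — replacing the count $N_B^\Lambda(\Delta)$ of centres of $H_\Lambda$ in a box by the eigenvalue count of the truncated operator $H_{\widehat B}$, which is what makes \eqref{minami} literally applicable. All of this is the technical content of \cite{KLM}; the conceptual point to retain is that Minami's estimate — not the Wegner estimate alone — improves the per-box bound from order $\epsilon R^d$ to order $\epsilon^2 R^{2d}$, converting the spectral separation of the two windows into the second factor $\epsilon^2$ at the price of $R^2$ (the position operator squared) times $R^d$ (the box volume, net of the box count), that is, the factor $|\log\epsilon|^{d+2}$.
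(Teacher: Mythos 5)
Your argument is correct and follows essentially the same route as the paper, whose proof of Corollary \ref{cor:mott1} simply extracts the two factors of $(E_i-E_j)$ from the commutators and then cites \cite[Theorem~4.1]{KLM} for the remaining bound; what you have written is, in effect, a correct reconstruction of the proof of that cited theorem (localization centres, near/far splitting at scale $R\sim|\log\epsilon|$, and the Minami estimate \eqref{minami} upgrading the per-box count from $\epsilon R^d$ to $\epsilon^2R^{2d}$). Your observation that the $s=1$ (fractional-moment) decay is what yields the clean exponent $d+2$ rather than $(d+2)/s$ is also consistent with Remark \ref{fmm1} and with the statement of \eqref{eq:mott2}.
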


\begin{proof}
As mentioned, this result is a consequence of the analysis in \cite{KLM} of the Mott formula.
We go back to the decomposition given in \eqref{density21} of that paper,
and estimate each term using \cite[Theorem~4.1]{KLM}.
Applying \cite[Theorem~4.1]{KLM} directly to bound $M([E,E+\epsilon],[E-\epsilon,E])$, we obtain
(\ref{eq:mott2}).
\end{proof}

\end{remark}


\section{Relation to localization length}\label{sec:loclength1}

The second moment of the position operator is used in the following covariant
definition of the localization length due to Bellissard, van Elst, and Schulz-Baldes
\cite{BvESB}.

\begin{definition}
The localization length for an interval $\Delta \subset \R$ is
\beq\label{eq:loclength1}
\ell (\Delta)^2 \equiv \lim_{T \rightarrow \infty} \frac{1}{T} \int_0^T \mathcal{T}
( E(\Delta) | x(t) - x|^2 E(\Delta) ) ~dt,
\eeq
where $x(t) \equiv e^{-itH} x e^{i tH}$.
\end{definition}

If $\Delta \subset \R$ is contained in the complete localization region it is easily
seen from (\ref{sl-bounds1}) that the localization length is finite but one expects that the finiteness
of the localization length holds in a much larger energy domain. Notice, however, that as
shown in \cite{BvESB},
the finiteness of $\ell(\Delta)$ implies that the spectrum of $H_\omega$ in $\Delta$ is pure point.
This does not {\it a priori}
imply estimates such as (\ref{sl-bounds1}).

The ccc-measure on the diagonal and the localization length are very closely related.
Recall that the velocity operator is $\nabla H = i [ H, x]$.

\begin{proposition}\label{locallength1}
Let $E( d\lambda)$ be the spectral family for $H_\omega$. The ccc-measure is related to the localization length by
\beq\label{eq:ccc-loc2}
\ell (\Delta)^2 = 2 \int_\R \int_\Delta \frac{
\mathcal{T} (E (d\mu) \nabla H E (d \lambda) \nabla H E ( d\mu) )}{(\lambda - \mu)^2} .
\eeq
Furthermore, we have the bound
\beq\label{eq:ccc-loc1}
\frac{\mathcal{T}( E(\Delta) \nabla H E (\Delta) \nabla H
E (\Delta) ) }{ |\Delta|^2} \leq  \ell (\Delta)^2 .
\eeq
\end{proposition}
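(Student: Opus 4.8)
The plan is to derive the spectral representation \eqref{eq:ccc-loc2} first, and then extract the bound \eqref{eq:ccc-loc1} as a one-line consequence by restricting the outer integral. The starting point is the time-evolved position operator $x(t) = e^{-itH} x e^{itH}$ appearing in the definition \eqref{eq:loclength1} of the localization length. First I would compute, using the spectral theorem for $H_\omega$ applied to the two factors in $E(\Delta) (x(t)-x) E(\Delta)$, how $x(t)-x$ acts between spectral components $E(d\mu)$ and $E(d\lambda)$ of $H$. The key identity is that conjugation by $e^{itH}$ inserts a phase $e^{it(\mu-\lambda)}$ between the spectral pieces, so that $E(d\mu)(x(t)-x)E(d\lambda) = (e^{it(\mu-\lambda)} - 1) E(d\mu)\, x\, E(d\lambda)$. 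Writing $\nabla H = i[H,x]$ and using $E(d\mu) i[H,x] E(d\lambda) = i(\mu - \lambda) E(d\mu) x E(d\lambda)$, we can trade the position operator $x$ (which is not covariant and not directly controlled) for the velocity operator $\nabla H$ divided by $(\mu-\lambda)$, on the off-diagonal part $\mu \neq \lambda$; the diagonal part $\mu = \lambda$ contributes nothing since the phase factor vanishes there.

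Next I would substitute this into \eqref{eq:loclength1}, expand $|x(t)-x|^2 = (x(t)-x)^* (x(t)-x)$, insert a resolution of the identity $\int_\R E(d\mu')$ in the middle, and use covariance together with the cyclicity of the trace-per-unit volume $\mathcal{T}$ (as in \eqref{cyclic}) to reorganize into the form $\mathcal{T}(E(d\mu) \nabla H E(d\lambda) \nabla H E(d\mu))$ with an integral kernel $|e^{it(\mu-\lambda)}-1|^2 / (\lambda-\mu)^2$. The Cesàro time average $\frac{1}{T}\int_0^T |e^{it(\mu-\lambda)}-1|^2\, dt = \frac{1}{T}\int_0^T (2 - 2\cos(t(\mu-\lambda)))\, dt$ converges to $2$ as $T \to \infty$ for every $\mu \neq \lambda$ (and is uniformly bounded by $4$), so by dominated convergence with respect to the finite positive measure $\mathcal{T}(E(d\mu)\nabla H E(d\lambda)\nabla H E(d\mu))/(\lambda-\mu)^2$ — whose finiteness is exactly what the hypothesis of finite localization length supplies — we obtain the factor $2$ and the representation \eqref{eq:ccc-loc2}. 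For the bound \eqref{eq:ccc-loc1}, I would restrict the inner integral $\int_\R$ to $\mu \in \Delta$ and observe that for $\lambda, \mu \in \Delta$ we have $(\lambda-\mu)^2 \le |\Delta|^2$, so $1/(\lambda-\mu)^2 \ge |\Delta|^{-2}$; since the integrand is a positive measure, dropping the region $\mu \notin \Delta$ and using this lower bound on the kernel gives $\ell(\Delta)^2 \ge 2 |\Delta|^{-2} \mathcal{T}(E(\Delta)\nabla H E(\Delta) \nabla H E(\Delta)) \ge |\Delta|^{-2}\mathcal{T}(E(\Delta)\nabla H E(\Delta)\nabla H E(\Delta))$.

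The main obstacle I anticipate is the rigorous handling of the unbounded position operator $x$ and the interchange of the trace-per-unit volume with the time integral and the spectral integrals. The operator $E(\Delta) x E(\Delta)$ need not be bounded or even densely defined in an obvious way, so the manipulations above are formally transparent but require care: one should work with the matrix-element / sesquilinear-form version, exploiting that $E(\Delta) \in \mathcal{K}_2$ (so $\chi_0 E(\Delta)$ is Hilbert–Schmidt) and that $[H,x]$ applied after $E(\Delta)$ is controlled, and justify the Fubini-type exchanges by the absolute convergence guaranteed by $\ell(\Delta) < \infty$. I would cite \cite{BvESB} and \cite{[BGKS]} for the functional-analytic framework that makes these steps legitimate, and restrict attention to the off-diagonal part where all operators in sight are genuinely bounded before passing to the limit.
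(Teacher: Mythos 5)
Your proposal is correct and follows essentially the same route as the paper: both reduce to the spectral kernel $|e^{it(\mu-\lambda)}-1|^2/(\mu-\lambda)^2$ (you via the phase factor on $E(d\mu)(x(t)-x)E(d\lambda)$ plus the commutator identity, the paper via the Fundamental Theorem of Calculus applied to $x(t)-x$ followed by the double time integral — the same computation in a different order), then take the Ces\`aro average to get the factor $2$, and obtain \eqref{eq:ccc-loc1} by restricting $\mu$ to $\Delta$ and bounding $(\lambda-\mu)^2\le|\Delta|^2$. Your observation that this actually yields the stronger bound with an extra factor of $2$ is accurate, and your cautions about the unbounded operator $x$ are appropriate (the paper glosses over them).
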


\begin{proof}
We use the Fundamental Theorem of Calculus to write
\beq\label{eq:fund1}
x(t) - x = - \int_0^t ~e^{-i s H} V e^{i sH} ~ds.
\eeq
Taking the absolute square, we obtain
\beq\label{eq:fund2}
| x(t) - x |^2  =  \int_0^t ~dw \int_0^t ~ds ~e^{-i s H} \nabla H e^{i (s- w) H} \nabla H e^{i w H} .
\eeq
Using the spectral family for $H$ in the form $\int_\R E(d \lambda) = 1$, we obtain
\bea\label{eq:fund3}
\lefteqn{\mathcal{T}( E(\Delta) |x(t)-x|^2 E(\Delta) )} \nonumber \\
& = & \int_\Delta \int_\R
\left\{ \int_0^t ~ds e^{- i s (\mu - \lambda)} \int_0^t ~dw  e^{ i w ( \mu - \lambda)}
\mathcal{T} ( E(d\mu) \nabla H E(d \lambda) \nabla H E( d\mu) ) \right\} . \nonumber \\
 & &
\eea
We first perform the integration over $s$ and $w$.
We next take the time average. Finally, taking the limit $T\rightarrow \infty$,
we obtain (\ref{eq:ccc-loc2}).

In order to obtain (\ref{eq:ccc-loc1}),  we write the projectors in the trace-per-unit volume on the left of (\ref{eq:ccc-loc1}) as
$E(\Delta) = \int_\Delta E(d\mu)$ and note that $(\lambda - \mu)^2 \leq | \Delta|^2$
for $\lambda, \mu \in \Delta$.
\end{proof}

The simple {\it proof of Theorem} \ref{thm:loc-length1} follows from (\ref{eq:ccc-loc2}). For $E \in \R$, and $\epsilon >0$ small, we have
\bea\label{eq:loc-lbd1}
\frac{2}{\epsilon^2} M( I_\epsilon (E), I_\epsilon (E)) & = &
\frac{2}{\epsilon^2} \int \int_{I_\epsilon (E) \times I_\epsilon (E)}
 \mathcal{T} ( E( d\mu) \nabla H E( d \lambda ) \nabla H E( d\mu) ) \nonumber \\
  & \leq &   \ell ( I_\epsilon (E) )^2.
\eea
Taking $E_1=E_2$, we obtain the proof of the second part of the theorem.

We next present an explicit formula relating the localization length $\ell (\Delta)$ to the
second moment of the position operator.

\begin{proposition}\label{prop-second1}
We have the following identity:
\beq\label{eq:loc-xmoment1}
\frac{1}{2} \ell (\Delta)^2 = \mathcal{T} (E(\Delta) |x|^2 E(\Delta) ) -
\int_\Delta \mathcal{T}( E (d\lambda) x E (d\lambda) x E(d\lambda )) \geq 0 .
\eeq
where $E( d\lambda) $ is the spectral family for
$H_\omega$.
Consequently, if the second moment of the position operator
$\mathcal{T} (E(\Delta) |x|^2 E(\Delta) ) $ on the right side
of (\ref{eq:loc-xmoment1}) is finite, then the localization length is finite.
\end{proposition}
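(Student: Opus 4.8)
The plan is to read the identity (\ref{eq:loc-xmoment1}) off directly from formula (\ref{eq:ccc-loc2}) of Proposition \ref{locallength1}, which is already available. First I would write the velocity operator as the commutator $\nabla_\alpha H = i[H,x_\alpha]$ and compute the spectral matrix elements that occur in (\ref{eq:ccc-loc2}): the spectral theorem gives the ``infinitesimal'' identity $E(d\mu)[H,x_\alpha]E(d\lambda) = (\mu-\lambda)\,E(d\mu)\,x_\alpha\,E(d\lambda)$, and hence
\[
\mathcal{T}(E(d\mu)\,\nabla H\,E(d\lambda)\,\nabla H\,E(d\mu)) = (\mu-\lambda)^2\,\mathcal{T}(E(d\mu)\,x\,E(d\lambda)\,x\,E(d\mu)),
\]
where throughout $x\,A\,x \equiv \sum_{\alpha=1}^d x_\alpha A x_\alpha$ (so in particular $x\cdot 1\cdot x = |x|^2$). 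The weight $(\lambda-\mu)^{-2}$ in (\ref{eq:ccc-loc2}) is understood with the diagonal $\lambda=\mu$ excluded, which is consistent: there the integrand $\mathcal{T}(E(d\mu)\nabla H E(d\mu)\nabla H E(d\mu))$ vanishes identically, since $E(d\mu)\nabla H E(d\mu)=0$. Substituting the displayed identity into (\ref{eq:ccc-loc2}) cancels the weight and yields
\[
\frac12\,\ell(\Delta)^2 = \int_{\mu\in\Delta}\ \int_{\lambda\in\R,\ \lambda\neq\mu}\ \mathcal{T}(E(d\mu)\,x\,E(d\lambda)\,x\,E(d\mu)),
\]
$\mu$ being the energy variable that appears doubly.

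Next I would restore the diagonal by adding and subtracting the $\lambda=\mu$ piece:
\[
\frac12\,\ell(\Delta)^2 = \int_{\mu\in\Delta}\ \int_{\lambda\in\R}\ \mathcal{T}(E(d\mu)\,x\,E(d\lambda)\,x\,E(d\mu)) - \int_{\mu\in\Delta} \mathcal{T}(E(d\mu)\,x\,E(d\mu)\,x\,E(d\mu)).
\]
In the first term I use $\int_{\lambda\in\R}E(d\lambda)=1$ to collapse the inner integral, obtaining $\int_{\mu\in\Delta}\mathcal{T}(E(d\mu)\,|x|^2\,E(d\mu))$; partitioning $\Delta$ into small intervals $I_k$, cyclicity of $\mathcal{T}$ on $\mathcal{K}_2$ together with $E(I_k)^2=E(I_k)$ gives $\mathcal{T}(E(I_k)|x|^2 E(I_k)) = \mathcal{T}(E(I_k)|x|^2)$, and summing on $k$ produces $\mathcal{T}(E(\Delta)|x|^2 E(\Delta))$. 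This is exactly (\ref{eq:loc-xmoment1}). The inequality there is then automatic: $\ell(\Delta)^2\ge 0$ by its definition (it is a limit of time-averages of $\mathcal{T}(E(\Delta)|x(t)-x|^2E(\Delta))\ge 0$), and in fact the subtracted term is itself nonnegative, because with the self-adjoint operator $Y_{\alpha,\lambda}\equiv E(d\lambda)\,x_\alpha\,E(d\lambda)$ one has $\mathcal{T}(E(d\lambda)\,x\,E(d\lambda)\,x\,E(d\lambda)) = \sum_\alpha\mathcal{T}(Y_{\alpha,\lambda}^2)\ge 0$. The last claim of the proposition follows at once: if $\mathcal{T}(E(\Delta)|x|^2E(\Delta))<\infty$ then $\frac12\ell(\Delta)^2\le \mathcal{T}(E(\Delta)|x|^2E(\Delta))<\infty$, so $\ell(\Delta)$ is finite.

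The actual work is entirely bookkeeping inside the trace-per-unit-volume ($\mathcal{K}_2$) framework of \cite{[BGKS]} and \cite{BvESB}: the ``infinitesimal'' spectral identities, the isolation of the diagonal $\lambda=\mu$, and the collapse $\int_{\mu\in\Delta}\mathcal{T}(E(d\mu)|x|^2E(d\mu))=\mathcal{T}(E(\Delta)|x|^2E(\Delta))$ all have to be obtained by first replacing $E(d\lambda)$ by $E(I_k)$ over a finite partition of $\R$, invoking cyclicity of $\mathcal{T}$ and the fact that spectral projections $E(J)$ lie in $\mathcal{K}_2$, and then refining. Since $x_\alpha E(d\lambda)$ need not itself belong to $\mathcal{K}_2$, one should first truncate $x_\alpha\mapsto x_\alpha\,\chi_{|x|\le R}$, run the argument, and let $R\to\infty$ by monotone convergence, so that (\ref{eq:loc-xmoment1}) holds as an identity in $[0,+\infty]$ and the final implication is meaningful with no a priori integrability hypothesis. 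I expect this partition/truncation limiting procedure, rather than any algebraic step, to be the main technical obstacle.
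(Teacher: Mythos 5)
Your argument is exactly the paper's: the authors' entire proof is the remark that the identity ``follows from (\ref{eq:ccc-loc2}) by expanding the commutator in the definition of $\nabla H$,'' which is precisely your computation $E(d\mu)\nabla H\,E(d\lambda)\nabla H\,E(d\mu)=(\mu-\lambda)^2E(d\mu)\,x\,E(d\lambda)\,x\,E(d\mu)$ followed by restoring the diagonal and collapsing $\int_\R E(d\lambda)=1$. Your proposal is correct and simply supplies the bookkeeping (and the truncation caveat) that the paper leaves implicit.
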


\begin{proof}
The proof of the equality in (\ref{eq:loc-xmoment1}) follows from
(\ref{eq:ccc-loc2}) by expanding the commutator in the definition of $\nabla H$.
The nonnegativity is obvious and this proves the second part of the proposition.
\end{proof}

We remark that the formulas in (\ref{eq:loclength1}) and in (\ref{eq:loc-xmoment1}) are manifestly covariant with respect to lattice translations.
It is known \cite{BvESB} that $\ell (\Delta) < \infty$ implies that the spectrum of $H_\omega$ in $\Delta$ is pure point almost surely.
The finiteness of the term $\mathcal{T}( E(\Delta) |x|^2 E(\Delta))$ should be compared with the finiteness condition of the Fermi
projector given in (\ref{assumption1}) for the continuum model and in (\ref{fermiproj2}) for the lattice model.



\section{Conductivity for Random Schr\"odinger Operators}\label{related1}


In this section, we discuss the various notions of conductivity that have occurred in recent
literature and provide a justification for the calculation of the conductivity
presented in section \ref{cccm-intro}.
As a starting point, we begin with the adiabatic approach to the Kubo formula as presented in \cite{[BGKS]}.
We then consider related results of Fr\"ohlich and Spencer (1983) \cite{[FS1]}, Kunz (1987) \cite{[Kunz1]}, Bellissard,
van Elst , and Schulz-Baldes (1994) \cite{BvESB}, Aizenman-Graf (1998) \cite{AG},
and Nakano (2002) \cite{Nakano1}.
We write $v_\alpha$ or $\nabla_\alpha H$ for the $\alpha^{\rm th}$-component of the velocity operator
$v_\alpha = \nabla_\alpha H = i [ H, x_\alpha]$.

We mention another derivation of the Kubo-Streda formula for the transverse conductivity
in two-dimensions for perturbations of the Landau Hamiltonian at zero frequency and temperature
is presented in \cite[section IV]{CNP}.
This derivation does not require the use of an electric field that is adiabatically switched on.
On the other hand, the Fermi level has to be restricted to a spectral gap of the Hamiltonian,
rather than to the region of localization.


\subsection{The Adiabatic Definition of the Conductivity}\label{adiabatic}

\subsubsection{Kubo formula through linear response theory}
In the paper of Bouclet, Germinet, Klein and Schenker \cite{[BGKS]}, the total charge transport
is calculated in linear response theory in which the
electric field is adiabatically switched on. This provides a rigorous
derivation of the DC Kubo formula for the
conductivity tensor. A noncommutative integration approach is presented in \cite{DG,D} which somewhat simplifies manipulation of operators.
The random, $\Z^d$-ergodic Hamiltonian $H_\omega$ on
$L^2 (\R^d)$ has the form
\beq\label{general1}
H_\omega = (i \nabla + A_\omega)^2 + V_\omega,
\eeq
where $(A_\omega, V_\omega)$ are random variables so that this operator
is self-adjoint. We add a time-dependent, homogeneous, electric field $\mathcal{E}(t)$ that is
adiabatically switched on from $t= - \infty$ until $t=0$, when it obtains full strength. Such
a field is represented by $\mathcal{E}(t) = e^{\eta t_-} \mathcal{E}$,
where $\mathcal{E}$ is a constant and $t_- \equiv
\mbox{min} ~(0 , t)$. The usual gauge choice is the time-dependent Stark Hamiltonian given by
\beq\label{general2}
\tilde{H}_\omega (t) = H_\omega + \mathcal{E}(t) \cdot x .
\eeq
This Hamiltonian is not bounded from below. As is well-known, and used
in \cite{[BGKS]}, one can make another choice of gauge to eliminate this technical problem. Let
$F(t)$ be the function  given by
\beq\label{general3}
F(t) = \int_{- \infty}^t ~\mathcal{E}(s) ~ds,
\eeq
and note that the integral converges provided $\eta > 0$.

We now consider another Hamiltonian obtained from
$H_\omega$ in (\ref{general1}) by a time-dependent gauge transformation
using the operator $G(t) = e^{i F(t) \cdot x}$:
\beq\label{general4}
H_\omega (t) \equiv G(t) H_\omega G^* (t) =
(i \nabla + A_\omega + F(t))^2 + V_\omega.
\eeq
This operator is manifestly bounded from below almost surely provided $V_\omega$ is also.
Furthermore, the two Hamiltonians
${H}_\omega (t)$ and $\tilde{H}_\omega (t)$, given in (\ref{general2}),
are physically equivalent in that they generate the same dynamics. If
$\psi_t$ solves the Schr\"odinger equation generated by $H_\omega (t)$,
then the gauge transformed wave function
$G(t)^* \psi_t$ solves the Schr\"odinger equation associated with $\tilde{H}_\omega (t)$.
Consequently, we will work with $H_\omega (t)$.

In order to describe the current at time $t=0$, the system is prepared in an initial equilibrium state
$\xi_\omega$ at time $t = - \infty$. The initial state is usually assumed to be the density matrix
corresponding to the Fermi distribution at temperature $T \geq 0$. If
$T > 0$, the state with Fermi energy $E_F$ is given by $\xi_\omega = n_F (H_\omega; T)$,
where $n_F (E; T)$ is given in (\ref{fermi1}). If $T=0$, the initial equilibrium
state is the Fermi projector $\xi_\omega = n_F (H_\omega; 0) = \chi_{(- \infty, E_F]}(H_\omega)$.
We will often write $P_{E_F}$ for this projection.
More general states are allowed, see \cite[section 5]{[BGKS]}.
The crucial assumption for the $T=0$ case is
\beq\label{assumption1}
\E \{ \| |x| P_{E_F} \chi_0 \|_2^2 \} < \infty ,
\eeq
which holds true whenever the Fermi level lies in an interval of
complete localization $\Xi^{CL}$ \cite{GKsudec}.
If the initial state is given by a Fermi distribution, together with
(\ref{assumption1}) if $T=0$,
the authors prove the existence of a unique time-dependent density matrix $\rho_\omega (t)$
solving the following Cauchy problem:
\bea\label{cauchy1}
i \partial_t \rho_\omega (t) &=& [ H_\omega(t), \rho_\omega (t) ]_\ast \nonumber \\
 \lim_{t \rightarrow - \infty} \rho_\omega (t) &=& \xi_{E_F} ,
\eea
in suitable noncommutative ${L}^1$ and  ${L}^2$ spaces
(see \cite{D,DG} for the construction and use of the more natural ${L}^p$ spaces defined over the
reference von Neumann algebra of the problem,
and the corresponding Sobolev spaces). In particular,
the commutator in (\ref{cauchy1}) requires some care in its definition when $H_\omega$
is an unbounded operator. The subscript star
reminds the reader of these noncommutative integration spaces.

In \cite{[BGKS]}, the authors also prove that in the $T=0$ case,
the density matrix $\rho_\omega (t)$ is an
orthogonal projection for all time.
Given this density matrix $\rho_\omega (t)$, the current at time $t=0$ is defined to be:
\beq\label{current1}
J(\eta, \mathcal{E}; E_F ) \equiv \mathcal{T} (\tilde{v} (0) \rho_\omega (0)) ,
\eeq
where the modified velocity operator is $\tilde{v}(0)  \equiv i [ H_\omega, x] - 2 F(0) = v - 2F(0)$.
At time $t = - \infty$, the system is assumed to be in equilibrium so the initial current is zero.
The current $J(\eta, \mathcal{E}; E_F )$ is the net current at time zero
obtained by adiabatically switching on the electric field.

Let $U(s,t)$ be the unitary propagator for the time-dependent Hamiltonian
(\ref{general4}). We denote by $\xi_{E_F} (\tau)$ the time-evolved operator obtained from
the Fermi distribution $n_F (H_\omega (\tau); T)$, as in (\ref{fermi1}) for $T>0$,
or as in (\ref{fermi2}), for $T=0$.
The components of the current are given explicitly by
\beq\label{current11}
J_\alpha (\eta, \mathcal{E}; E_F ) = - \mathcal{T} \left\{ \int_{-\infty}^0 ~d \tau e^{\eta \tau }
\tilde{v}_\alpha (0)
~\left( U(0,\tau) (i [ \mathcal{E} \cdot x, \xi_{E_F} (\tau) ] ) U( \tau, 0) \right)_* \right\}.
\eeq

Linear response theory now states that the conductivity is the constant of proportionality
between the current $J$ and the electric field $\mathcal{E}$, that is, that Ohm's Law holds:
$J = \sigma \mathcal{E}$. This is a tensorial relationship.
Formally, one should expand $J(\eta, \mathcal{E}; E_F)$, given in (\ref{current11}), about $\mathcal{E} = 0$.
The components $\sigma_{\alpha \beta}$ are obtained via the derivative:
\beq\label{current2}
\sigma_{\alpha \beta} ( \eta, E_F)
= \left( \frac{\partial J_\alpha (\eta, \mathcal{E}; E_F)}{\partial \mathcal{E}_\beta} \right)_{ \mathcal{E} = 0 }  .
\eeq
The question of the limit $\mathcal{E} \rightarrow 0$ is addressed in the following theorem.

\begin{theorem}\label{kubo-BGKS}\cite[Theorem 5.9]{[BGKS]}
For $\eta >0$, the current defined in (\ref{current1}) is differentiable with respect
to $\mathcal{E}_j$ at $\mathcal{E} = 0$, and the conductivity tensor is given by
\beq\label{kubo-bgks1}
\sigma_{\alpha \beta} (\eta; E_F) = - \mathcal{T} \left\{  \int_{- \infty}^0 ~d \tau ~e^{\eta \tau} ~ v_\alpha
~ \left( e^{- i \tau H_\omega} (i [ x_\beta, \xi_{E_F} ] ) e^{i \tau H_\omega} \right)_* \right\}.
\eeq
\end{theorem}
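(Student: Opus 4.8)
The statement is \cite[Theorem~5.9]{[BGKS]}, so the plan is only to indicate the structure of the argument given there. The idea is to compute the derivative in \eqref{current2} directly from the closed expression \eqref{current11} for the current. The dependence of the right-hand side of \eqref{current11} on $\mathcal{E}$ enters in four places: the modified velocity $\tilde v_\alpha(0)=v_\alpha-2F(0)$ (with $F$ as in \eqref{general3}), the propagators $U(0,\tau)$ and $U(\tau,0)$ generated by the field-dependent Hamiltonian \eqref{general4}, the time-evolved Fermi state $\xi_{E_F}(\tau)=n_F(H_\omega(\tau);T)$, and the explicit commutator $i[\mathcal{E}\cdot x,\xi_{E_F}(\tau)]$. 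Since $\mathcal{E}(s)=e^{\eta s_-}\mathcal{E}$ and $F(t)=\int_{-\infty}^t\mathcal{E}(s)\,ds$ are linear in $\mathcal{E}$ and vanish identically when $\mathcal{E}=0$, at $\mathcal{E}=0$ one has $F\equiv 0$, so $\tilde v_\alpha(0)=v_\alpha$, $H_\omega(t)=H_\omega$, $U(0,\tau)=e^{-i\tau H_\omega}$, $U(\tau,0)=e^{i\tau H_\omega}$, and $\xi_{E_F}(\tau)=\xi_{E_F}$. Because the integrand in \eqref{current11} already carries one explicit factor of $\mathcal{E}$ through the commutator, differentiating any of the other three $\mathcal{E}$-dependent factors produces a term still multiplied by $i[\mathcal{E}\cdot x,\xi_{E_F}(\tau)]\big|_{\mathcal{E}=0}=0$; hence only the derivative of the explicit commutator survives, and $\partial_{\mathcal{E}_\beta}(i[\mathcal{E}\cdot x,\xi_{E_F}(\tau)])\big|_{\mathcal{E}=0}=i[x_\beta,\xi_{E_F}]$. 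Substituting, one obtains exactly \eqref{kubo-bgks1}.

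First I would recall, following \cite[Section~5]{[BGKS]}, that under the standing hypotheses — in particular \eqref{assumption1} in the $T=0$ case — the Cauchy problem \eqref{cauchy1} has a unique solution $\rho_\omega(\cdot;\mathcal{E})$ in the relevant noncommutative $L^1$ and $L^2$ spaces, that the Duhamel expansion of $\rho_\omega(0;\mathcal{E})$ around the field-free evolution converges with first-order term producing \eqref{current11} and remainder $o(|\mathcal{E}|)$ in the noncommutative $L^1$ norm, so that $J(\eta,\mathcal{E};E_F)$ is genuinely differentiable at $\mathcal{E}=0$ and \eqref{current2} is meaningful. Next I would justify interchanging $\partial/\partial\mathcal{E}_\beta$ with $\mathcal{T}$ and with the time integral $\int_{-\infty}^0 d\tau$: the factor $e^{\eta\tau}$ with $\eta>0$ gives integrability at $\tau\to-\infty$, while the Hilbert--Schmidt bounds coming from $E(J)\in\mathcal{K}_2$ and from $\|\,|x|\,\xi_{E_F}\chi_0\|_2<\infty$ (equivalently \eqref{assumption1}) supply the dominating estimates.

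The main obstacle is the bookkeeping with unbounded operators. The position operator $x$, the velocity $v_\alpha=i[H_\omega,x_\alpha]$ and the generators $H_\omega(t)$ are all unbounded, so the commutators in \eqref{cauchy1}, \eqref{current11} and \eqref{kubo-bgks1}, as well as the products appearing after differentiation, must be read in the noncommutative integration framework of \cite{[BGKS]} (the subscript $*$), and each such term has to be shown to lie in the correct $L^p$ space before cyclicity of $\mathcal{T}$ and the differentiation under the integral are applied. Most delicate is the gauge-covariance step that lets one replace $U(0,\tau)$ by $e^{-i\tau H_\omega}$ up to an $O(|\mathcal{E}|)$ error uniformly in $\tau\le 0$, together with the control of $\xi_{E_F}(\tau)-\xi_{E_F}$ and of $|x|\xi_{E_F}$; this is precisely the content of \cite[Section~5]{[BGKS]}, to which we refer for the full proof.
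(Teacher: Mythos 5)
Your proposal is correct and follows essentially the same route as the paper, which offers no proof of this statement beyond the citation to \cite[Theorem 5.9]{[BGKS]}: your outline --- differentiating the exact Duhamel expression \eqref{current11} at $\mathcal{E}=0$ and observing that only the derivative of the explicit commutator $i[\mathcal{E}\cdot x,\xi_{E_F}(\tau)]$ survives, since every other $\mathcal{E}$-dependent factor (the modified velocity, the propagators, the evolved state) is multiplied by a term vanishing at $\mathcal{E}=0$ --- is precisely the mechanism of that proof, and you correctly flag the noncommutative-$L^p$ bookkeeping as the technical content. The only small slip is in your second paragraph: \eqref{current11} is the exact current obtained from Duhamel's formula for $\rho_\omega(0)$, not merely the first-order term of an expansion with $o(|\mathcal{E}|)$ remainder; the linearization enters only when the derivative at $\mathcal{E}=0$ is taken.
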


\noindent
Once again, the subscript star in (\ref{kubo-bgks1}) indicates that
operators and product of operators are considered within suitable noncommutative integration spaces.
Formula (\ref{kubo-bgks1}) is valid for a large family of
equilibrium initial states including the Fermi distributions.
The analogue of \cite[Eq.~(41)]{BvESB} and
\cite[Theorem~1]{SBB2} then holds.

\begin{corollary}\label{corsgmjk}\cite[Corollary 5.10]{[BGKS]}
Assume that $\mathcal{E}(t)=\Re\e^{i\nu t}$, $\nu\in\R$,
then the conductivity  $\sigma_{jk}(\eta;{{\zeta}};\nu) $ at frequency $\nu$ is given by
\begin{eqnarray}\label{sigmajkbis}
\sigma_{jk}(\eta;\zeta;\nu;0)
&=&\, - \mathcal{T} \left\{ v_\alpha \,
(i\mathcal{L}_1 +\eta +i\nu)^{-1} \left( \partial_\beta \zeta )\right\}\right\rbrace ,
\end{eqnarray}
where $\mathcal{L}_1$ is the Liouvillian generating \eqref{cauchy1}, and $\partial_\beta \zeta=i [ x_\beta, \xi_{E_F} ]$.
\end{corollary}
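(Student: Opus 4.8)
The plan is to re-run the linear-response derivation behind Theorem~\ref{kubo-BGKS}, now driving the system with the time-periodic adiabatic profile $\mathcal{E}(t)=\e^{\eta t_-}\Re(\e^{i\nu t})\,\mathcal{E}$ instead of $\mathcal{E}(t)=\e^{\eta t_-}\mathcal{E}$, and to check that the only effect on the final formula is the replacement of the Laplace parameter $\eta$ by $\eta+i\nu$. Since $\Re(\e^{i\nu t})=\frac12(\e^{i\nu t}+\e^{-i\nu t})$ and the map (field profile)~$\mapsto$~(induced current, to first order in $\mathcal{E}$) is linear, it is enough to treat the single complex profile $\mathcal{E}(t)=\e^{\eta t_-}\e^{i\nu t}\mathcal{E}$ and read off the coefficient of $\e^{i\nu t}$ in the current; the $\e^{-i\nu t}$ half then yields the same formula with $\nu\mapsto-\nu$, consistent with $\sigma(-\nu)=\overline{\sigma(\nu)}$.

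The whole apparatus supporting Theorem~\ref{kubo-BGKS} carries over from \cite{[BGKS]} with $\nu$ as a spectator: the gauge-transformed Hamiltonian $H_\omega(t)=(i\nabla+A_\omega+F(t))^2+V_\omega$ with $F(t)=\int_{-\infty}^t\mathcal{E}(s)\,ds$ (the tail integral still converges because $\eta>0$, and now $F(0)=\mathcal{E}/(\eta+i\nu)$), the solvability of the Cauchy problem \eqref{cauchy1} for $\rho_\omega(t)$ in the relevant noncommutative $L^1$/$L^2$ spaces, the differentiability of $\rho_\omega(0)$ in $\mathcal{E}$ at $\mathcal{E}=0$, and the cancellation of the gauge-artifact terms carried by $\tilde v(0)=v-2F(0)$ and by the $|F(t)|^2$-part of $H_\omega(t)$, which drop out of the final formula exactly as at $\nu=0$. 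The lone place $\nu$ enters is the first-order Duhamel term \eqref{current11}: its source at time $\tau$ is $i[\mathcal{E}(\tau)\cdot x,\xi_{E_F}(\tau)]$, so the adiabatic weight $\e^{\eta\tau}$ appearing there becomes $\e^{(\eta+i\nu)\tau}$. Evaluating the remaining factors at $\mathcal{E}=0$ --- where $\xi_{E_F}(\tau)\to\xi_{E_F}$, $\tilde v_\alpha(0)\to v_\alpha$, and $U(0,\tau)(\cdot)U(\tau,0)$ reduces to the free Heisenberg evolution $\e^{-i\tau H_\omega}(\cdot)\e^{i\tau H_\omega}$ --- and differentiating in $\mathcal{E}_\beta$, the first-order coefficient of the current is just \eqref{kubo-bgks1} with $\e^{\eta\tau}$ replaced by $\e^{(\eta+i\nu)\tau}$:
\[
-\mathcal{T}\!\left\{\int_{-\infty}^{0}\! \e^{(\eta+i\nu)\tau}\, v_\alpha\big(\e^{-i\tau H_\omega}(i[x_\beta,\xi_{E_F}])\,\e^{i\tau H_\omega}\big)_*\,d\tau\right\}
= -\mathcal{T}\big\{v_\alpha\,(i\mathcal{L}_1+\eta+i\nu)^{-1}(\partial_\beta\zeta)\big\},
\]
with $\partial_\beta\zeta=i[x_\beta,\xi_{E_F}]$; the $\tau$-integral converges in operator norm and evaluates to the resolvent because $\spec(\mathcal{L}_1)\subset\R$, so $i\mathcal{L}_1+\eta+i\nu$ is boundedly invertible for $\eta>0$. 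This is \eqref{sigmajkbis}. As a cross-check, since $\e^{\eta\tau}\e^{i\nu\tau}=\e^{(\eta+i\nu)\tau}$, formula \eqref{sigmajkbis} is exactly what one gets by continuing \eqref{kubo-bgks1} analytically in $\eta$ from $(0,\infty)$ to $\eta+i\nu$ in the half-plane $\{\Re\eta>0\}$, on which both sides are analytic --- the right side because $(i\mathcal{L}_1+\eta)^{-1}$ is analytic off $\spec(i\mathcal{L}_1)\subset i\R$.

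The real work --- which I would import from \cite[section 5]{[BGKS]} rather than redo --- is the noncommutative-integration bookkeeping: that $\partial_\beta\zeta=i[x_\beta,\xi_{E_F}]$ lies in the space on which $(i\mathcal{L}_1+\eta+i\nu)^{-1}$ acts boundedly and on which $\mathcal{T}(v_\alpha\,\cdot\,)$ is finite --- this is where the $T=0$ hypothesis \eqref{assumption1} is used --- together with the justification of the interchanges of $\mathcal{T}$ with the $\tau$-integral and with $\partial_{\mathcal{E}_\beta}$, i.e.\ that the formal linear response is a genuine first-order Taylor expansion whose remainder is $\mathcal{O}(|\mathcal{E}|^2)$, uniformly in $\tau$ and in the regulator $\eta$. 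Granted that machinery --- all of it already established in the proof of Theorem~\ref{kubo-BGKS} --- the passage from the DC Kubo formula \eqref{kubo-bgks1} to the AC Kubo formula \eqref{sigmajkbis} is precisely the harmless substitution $\eta\mapsto\eta+i\nu$.
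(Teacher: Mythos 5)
The paper offers no proof of this corollary at all --- it is stated as a direct citation of \cite[Corollary 5.10]{[BGKS]} --- so there is nothing to compare line by line. Your reconstruction is the natural one and, as far as it goes, correct: linearity reduces $\Re\e^{i\nu t}$ to the single complex profile $\e^{\eta t_-}\e^{i\nu t}\mathcal{E}$, the only place the drive enters the first-order Duhamel term \eqref{current11} is through the source $i[\mathcal{E}(\tau)\cdot x,\xi_{E_F}(\tau)]$, so the adiabatic weight becomes $\e^{(\eta+i\nu)\tau}$, and the $\tau$-integral of $\e^{(\eta+i\nu)\tau}$ against the free Heisenberg evolution is exactly the resolvent $(i\mathcal{L}_1+\eta+i\nu)^{-1}$ applied to $\partial_\beta\zeta$, which is bounded for $\eta>0$ since $\mathcal{L}_1$ is self-adjoint on the relevant noncommutative $L^2$ space. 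Two small points deserve a flag rather than the phrase ``drops out exactly as at $\nu=0$'': first, the identification of the time integral with $(i\mathcal{L}_1+\eta+i\nu)^{-1}$ versus $(\eta+i\nu-i\mathcal{L}_1)^{-1}$ depends on the sign convention for the Liouvillian generating \eqref{cauchy1}, and you should state which one you are using so that the formula is consistent with \eqref{kubo-bgks1} at $\nu=0$; second, the gauge term $-2F(0)=-2\mathcal{E}/(\eta+i\nu)$ is now complex and multiplies $\mathcal{T}(\rho_\omega(0))$, which is \emph{not} zero at zeroth order in $\mathcal{E}$, so its cancellation is a genuine (if routine) step of the \cite{[BGKS]} bookkeeping you are importing, not an automatic consequence of the $\nu=0$ case. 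Since the paper itself delegates all of this to \cite{[BGKS]}, your proposal is an acceptable and essentially faithful account of where the result comes from.
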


For $T=0$, following ideas of \cite{BvESB}, the authors recover the Kubo-St\v{r}eda formula for the conductivity
tensor. The distribution $\xi_{E_F}$ is given by the Fermi projector as in (\ref{fermi2}).
In this case, we can take the limit $\eta \rightarrow 0$ and obtain an expression for the DC conductivity.

\begin{corollary}\label{kubo-streda-bgks1}\cite[Theorem 5.11]{[BGKS]}
Under the hypothesis (\ref{assumption1}) on the Fermi projector $P_{E_F}$, the conductivity tensor
is given by
\beq\label{kubo-streda-1}
\sigma^{DC}_{\alpha, \beta} (E_F) = \sigma_{\alpha \beta} = - i \mathcal{T} \left\{ P_{E_F} [[x_\alpha, P_{E_F}],[x_\beta, P_{E_F}]]_* \right\} .
\eeq
The tensor is antisymmetric, so that $\sigma_{\alpha \alpha} = 0$, for $\alpha = 1, \ldots, d$. If,
in addition to (\ref{assumption1}), the magnetic vector potential in (\ref{general1})
vanishes, $A_\omega = 0$, so that the Hamiltonian $H_\omega$
is time-reversal invariant, all components of the conductivity tensor vanish.
\end{corollary}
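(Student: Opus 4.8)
The plan is to derive (\ref{kubo-streda-1}) by specializing the DC Kubo formula already available to us --- Theorem \ref{kubo-BGKS}, equivalently Corollary \ref{corsgmjk} at frequency $\nu=0$ --- to the $T=0$ Fermi state $\xi_{E_F}=P_{E_F}$, and then letting the adiabatic parameter $\eta$ go to $0^+$. Write $P:=P_{E_F}$ and note $\partial_\beta\xi_{E_F}=i[x_\beta,P]$, so that Theorem \ref{kubo-BGKS} reads $\sigma_{\alpha\beta}(\eta;E_F)=-\mathcal{T}\{\int_{-\infty}^0 d\tau\, e^{\eta\tau}\, v_\alpha\,(e^{-i\tau H_\omega}(i[x_\beta,P])e^{i\tau H_\omega})_*\}$, i.e.\ $-\mathcal{T}\{v_\alpha(i\mathcal{L}_1+\eta)^{-1}(i[x_\beta,P])\}$ with $\mathcal{L}_1=[H_\omega,\cdot]$ (the Liouvillian of (\ref{cauchy1}) at $\mathcal{E}=0$). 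The reduction to the St\v{r}eda form rests on three facts: $P$ commutes with $H_\omega$, so $\mathcal{L}_1(P)=0$ and $e^{\pm i\tau H_\omega}$ commutes with $P$; the operator $[x_\beta,P]$ is purely off-diagonal, $P[x_\beta,P]P=P^\perp[x_\beta,P]P^\perp=0$; and $v_\alpha=i[H_\omega,x_\alpha]=i\mathcal{L}_1(x_\alpha)$. Following \cite{BvESB,[BGKS]}, one then integrates by parts in $\tau$ (equivalently, moves $\mathcal{L}_1$ off $x_\alpha$ using that $\mathcal{L}_1$ is antisymmetric for $\mathcal{T}$ and that $\mathcal{L}_1(i\mathcal{L}_1+\eta)^{-1}=-i(1-\eta(i\mathcal{L}_1+\eta)^{-1})$), decomposes $x_\alpha$ and $[x_\beta,P]$ into their $P$/$P^\perp$ blocks --- the off-diagonal blocks of $x_\gamma$ being the bounded pieces $P^\perp[x_\gamma,P]$ and $[x_\gamma,P]P^\perp$, which lie in $\mathcal{K}_2$ under (\ref{assumption1}) --- and regroups the surviving terms into $-i\,\mathcal{T}\{P[[x_\alpha,P],[x_\beta,P]]_*\}$. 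The genuinely delicate point, and the one I expect to be the main obstacle, is the limit $\eta\to0^+$: since in the localization regime there is no spectral gap at $E_F$, $i\mathcal{L}_1$ has $0$ in its spectrum and $(i\mathcal{L}_1+\eta)^{-1}$ is only controlled on the orthogonal complement of $\ker\mathcal{L}_1$; hypothesis (\ref{assumption1}), $\E\{\||x|P_{E_F}\chi_0\|_2^2\}<\infty$, is exactly what places $[x_\beta,P_{E_F}]$ in that complement and makes all the terms above converge in the relevant noncommutative $L^1$/$L^2$ norms. For this step I would simply invoke \cite[Theorem 5.11]{[BGKS]}.

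Granting the St\v{r}eda formula (\ref{kubo-streda-1}), the antisymmetry of the tensor is immediate from the antisymmetry of the commutator bracket: $[[x_\alpha,P],[x_\beta,P]]_*=-[[x_\beta,P],[x_\alpha,P]]_*$, so applying $-i\,\mathcal{T}\{P\,\cdot\,\}$ to both sides gives $\sigma_{\alpha\beta}=-\sigma_{\beta\alpha}$, and in particular $\sigma_{\alpha\alpha}=-\sigma_{\alpha\alpha}=0$ for every $\alpha$. As a complementary remark, $\sigma_{\alpha\beta}$ is automatically real: setting $B_\gamma:=[x_\gamma,P_{E_F}]$, the self-adjointness of $x_\gamma$ and $P_{E_F}$ gives $B_\gamma^*=-B_\gamma$, hence $[B_\alpha,B_\beta]^*=-[B_\alpha,B_\beta]$; then $\mathcal{T}(A^*)=\overline{\mathcal{T}(A)}$ together with the cyclicity of $\mathcal{T}$ force $\mathcal{T}\{P_{E_F}[B_\alpha,B_\beta]\}$ to be purely imaginary, so that $\sigma_{\alpha\beta}=-i\,\mathcal{T}\{P_{E_F}[B_\alpha,B_\beta]_*\}\in\R$.

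Finally, for the time-reversal-invariant case $A_\omega=0$: then $H_\omega=-\Delta+V_\omega$ commutes with complex conjugation $C$, an antiunitary that fixes $\chi_0$ and each $x_\alpha$; since $\chi_{(-\infty,E_F]}$ is real-valued, $CP_{E_F}C^{-1}=\chi_{(-\infty,E_F]}(H_\omega)=P_{E_F}$, hence $CB_\alpha C^{-1}=B_\alpha$ and $C\{P_{E_F}[B_\alpha,B_\beta]\}C^{-1}=P_{E_F}[B_\alpha,B_\beta]$. Because conjugation by an antiunitary complex-conjugates ordinary traces and $C\chi_0 C^{-1}=\chi_0$, we have $\mathcal{T}(CAC^{-1})=\overline{\mathcal{T}(A)}$, so $\mathcal{T}\{P_{E_F}[B_\alpha,B_\beta]\}$ is real. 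Combined with the previous paragraph, where the same quantity was shown to be purely imaginary, it must vanish; therefore $\sigma_{\alpha\beta}=0$ for all $\alpha,\beta$. The only routine verifications here are the trace identity $\mathcal{T}(CAC^{-1})=\overline{\mathcal{T}(A)}$ (a one-line computation in an orthonormal basis using antiunitarity of $C$) and that $P_{E_F}[B_\alpha,B_\beta]$ is of trace-per-unit-volume class --- both consequences of (\ref{assumption1}).
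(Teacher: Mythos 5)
The paper offers no proof of this corollary---it is stated purely as a citation of \cite[Theorem 5.11]{[BGKS]}---so there is no internal argument to compare against line by line; your sketch follows the same route as that reference (specialize the adiabatic Kubo formula of Theorem~\ref{kubo-BGKS} to $\xi_{E_F}=P_{E_F}$, pass to the limit $\eta\to 0^+$, with the genuinely delicate convergence step correctly identified as resting on (\ref{assumption1}) and deferred to the cited theorem). The two assertions you do prove in full are correct: antisymmetry follows from $[[x_\alpha,P],[x_\beta,P]]=-[[x_\beta,P],[x_\alpha,P]]$, and the vanishing for $A_\omega=0$ follows by showing $\mathcal{T}\{P[B_\alpha,B_\beta]\}$ is simultaneously purely imaginary (from $B_\gamma^*=-B_\gamma$ and cyclicity of $\mathcal{T}$ on $\mathcal{K}_2$) and real (from conjugation by the antiunitary complex conjugation $C$, which fixes $H_\omega$, $x_\gamma$, $\chi_0$ and hence $P_{E_F}$).
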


\noindent
Hence, all components of the DC conductivity vanish for time-reversal invariant systems
(Hamiltonians (\ref{general1}) with $A = 0$)
provided the Fermi projector satisfies the decay hypothesis (\ref{assumption1}).
We note that for lattice models, this condition is
\beq\label{fermiproj2}
\E \left\{ \sum_{x \in \Z^d} ~|x|^2 ~|\langle 0| P_{E_F} | x\rangle |^2 \right\} < \infty ,
\eeq
where $| y \rangle$ denoted the discrete delta function at $y \in \Z^d$.

For systems with nontrivial magnetic fields, the off-diagonal terms may
be nonzero. Of special interest
is the two-dimensional case with a nonzero, constant magnetic field. In this case,
the integer quantum Hall effect is the fact that the off-diagonal term $\sigma_{12}$
is an integer multiple of $e / h$ when the Fermi energy $E_F$
is in the region of localized states (see \cite{BvESB,AG} for lattice models and \cite{GKS,GKS2} for models in the continuum).

\subsubsection{Relation to the ccc-measure}

The formula derived from rigorous linear response theory (\ref{kubo-bgks1}) can be manipulated
in order to derive the $T=0$ formula (\ref{kubo1}).
We write (\ref{kubo-bgks1}) as
\beq\label{kubo-bgks2}
\sigma_{\alpha, \beta} (\eta; E_F) = - 2i
~\int_0^{ \infty} ~d \tau ~e^{- \eta \tau} ~\mathcal{T} ( v_\alpha
 e^{-i \tau H}  [ x_\beta, P_{E_F} ]  e^{i \tau H} ) .
\eeq
We use the spectral theorem $H = \int \lambda dE_\lambda$, and the identity
\beq\label{duhamel}
[ x_\beta, f(H) ] = i \int_{\R} ~\hat{f}(s) ~\int_0^s ~du ~e^{- i(s-u)H} v_\beta e^{-i u H} .
\eeq
Recalling that the Fermi projector $P_{E_F} = n_F (H; 0)$,
we formally write the trace-per-unit volume as
\beq\label{trace1}
\mathcal{T} ( v_\alpha e^{-i \tau H}  [ x_\beta, P_{E_F} ]  e^{i \tau H} )   =
\int \int \frac{n_F(\lambda;0) - n_F(\nu;0)}{\lambda - \nu} \mathcal{T} ( v_\alpha dE_\nu v_\beta  dE_\lambda )
\eeq
Substituting (\ref{trace1}) into (\ref{kubo-bgks2}),
and performing the time integration with use of definition (\ref{delta11}), we obtain
\bea\label{kubo4}
\sigma_{\alpha ,\beta} (E_F) & = &\lim_{\epsilon \rightarrow 0}
\int \int \frac{n_F(\lambda;0) - n_F(\nu;0)}{\lambda - \nu} ~\delta_\epsilon (\lambda - \nu)
\mathcal{T} ( v_\alpha dE_\nu v_\beta  dE_\lambda ) \nonumber \\
 &=& \lim_{\epsilon \rightarrow 0}
\int \int \frac{n_F(\lambda;0) - n_F(\nu;0)}{\lambda - \nu} ~\delta_\epsilon (\lambda - \nu)
M_{\alpha, \beta}(d \nu ,d \lambda ) ,
\eea
where $M_{\alpha,\beta}$ is the ccc-measure.


\subsection{Results on DC Conductivity}

We review several papers concerning the conductivity in one-particle systems. Typically,
these authors assume a formula for the conductivity and then show that under a condition
such as (\ref{assumption1}) or (\ref{fermiproj2}) the DC
conductivity vanishes.

\subsubsection{Fr\"ohlich-Spencer and Decay of the Green's Function}

In their fundamental paper of 1983, Fr\"ohlich and Spencer \cite{[FS1]} proved
the absence of diffusion and the vanishing of the DC conductivity for the lattice
Anderson model.
The authors assumed a form of the Kubo formula
that expresses the DC conductivity of a gas of noninteracting electrons at $T=0$ and Fermi energy $E_F$
in terms of the Green's function. Let $\rho (E) \geq 0$ be the density of states for $H_\omega$.
For $\alpha, \beta = 1, \ldots, d$, the conductivity tensor is given by
\beq\label{conductivityAG1}
\tilde{\sigma}_{\alpha, \beta}^{DC} (E_F) \rho (E_F) = \lim_{\eta \rightarrow 0} \frac{2 \eta^2}{d \pi} \sum_{x \in \Z^d}
x_\alpha x_\beta \E \{ |G(0,x; E_F + i \eta)|^2 \} .
\eeq
Fr\"ohlich and Spencer showed that if $E_F$ lies in an energy
interval for which the multiscale analysis (MSA) holds, then $\sigma^{DC}_{\alpha , \beta} (E_F)
= 0$, assuming $\rho (E_F) > 0$ (see \cite{{HM},{Weg81}}).

\begin{theorem}\cite{[FS1]}
Suppose the Hamiltonian $H_\omega = L + V_\omega$ on $\ell^2 ( \Z^d)$,
where $L$ is the lattice Laplacian (in particular, the magnetic field is zero).
Suppose that bound (\ref{msaest1}) holds for Lebesgue almost all energy in $(a,b)$. Then if $E_F
\in (a,b)$, the DC conductivity defined in (\ref{conductivityAG1})
at Fermi energy $E_F$ vanishes.
\end{theorem}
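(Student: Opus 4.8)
The plan is to reduce the statement to the ``absence of diffusion'' estimate
$$
\lim_{\eta \to 0^+} \eta^2 \sum_{x \in \Zd} |x|^2 \,\E\bigl\{|G(0,x;E_F+i\eta)|^2\bigr\} = 0 .
$$
Indeed, since $|x_\alpha x_\beta| \le |x|^2$, this bound forces the right-hand side of \eqref{conductivityAG1} to vanish, and as $\rho(E_F) > 0$ is assumed, it follows that $\tilde\sigma^{DC}_{\alpha,\beta}(E_F) = 0$ for every $\alpha,\beta$. Two elementary facts are kept in reserve: the deterministic resolvent bound $|G(0,x;E_F+i\eta)| \le \eta^{-1}$, and the spectral identity $\sum_x |G(0,x;E+i\eta)|^2 = \|R(E+i\eta)\delta_0\|^2 = \eta^{-1}\,\mathrm{Im}\,G(0,0;E+i\eta)$, so that $\sum_x \E\{|G(0,x;E_F+i\eta)|^2\}$ is $O(\eta^{-1})$ as $\eta\to 0$.

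The crux is to feed the multiscale estimate \eqref{msaest1} into this. At (Lebesgue-a.e.) energy $E_F \in (a,b)$, \eqref{msaest1} yields, for some $\gamma > 0$ and a constant $C_{E_F}$, the decay $\E\{|G(0,x;E_F+i\eta)|^2\} \le C_{E_F}\,e^{-\gamma|x|}$ \emph{uniformly for $\eta \in (0,1]$}. If \eqref{msaest1} is already an infinite-volume estimate this is immediate; if it is the usual finite-volume, high-probability multiscale conclusion, the transfer is the standard argument, which I would run once: iterate the geometric resolvent equation over annular shells $\{2^k \le \|x\| < 2^{k+1}\}$, controlling the ``good'' shells by \eqref{msaest1} and the small-probability ``bad'' shells by the a priori bound $\eta^{-1}$, with the multiscale exponent taken large enough that the bad contribution is absorbed; if \eqref{msaest1} furnishes only a fractional moment $\E\{|G|^s\}$ with $s<1$, one interpolates via $|G|^2 \le \eta^{-(2-s)}|G|^s$, which costs a power of $\eta$ that will be dominated by the $\eta^2$ prefactor. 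Granting the decay, $\sum_x |x|^2\,\E\{|G(0,x;E_F+i\eta)|^2\} \le C_{E_F}\sum_x |x|^2 e^{-\gamma|x|} =: C'_{E_F} < \infty$, uniformly in $\eta$, and multiplying by $\eta^2$ gives the displayed limit at once.

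The main obstacle is precisely the transfer step inside paragraph two: passing from the probabilistic, finite-volume multiscale output to a bound on the infinite-volume second moment $\E\{|G(0,x;E_F+i\eta)|^2\}$ that is uniform in $\eta$ down to $\eta = 0$. This is where one must track the multiscale exponent carefully, so that the exceptional-set contributions, controlled only by the crude bound $\eta^{-1}$, remain negligible after being weighted by $|x|^2$ and summed over $x$; keeping the estimate uniform in $\eta$ (rather than merely for a sequence $\eta \to 0$) is the subtle point. A secondary, purely bookkeeping issue is the almost-everywhere versus everywhere distinction: \eqref{msaest1}, being produced by a multiscale analysis on the interval $(a,b)$, naturally holds for Lebesgue-a.e.\ $E \in (a,b)$, so the conclusion is that $\tilde\sigma^{DC}_{\alpha,\beta}(E_F)=0$ for every $E_F \in (a,b)$ at which \eqref{msaest1} holds; combined with $\rho(E_F) > 0$ this gives the vanishing of the conductivity asserted in the theorem.
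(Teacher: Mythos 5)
Your proposal is correct and follows essentially the same route as the paper: since \eqref{msaest1} already gives an almost-sure, infinite-volume bound $\sup_{\eta>0}|G_\omega(x,y;E+i\eta)|\le C_I e^{-m_I|x-y|}$, one gets $\E\{|G(0,x;E_F+i\eta)|^2\}\le C_I^2 e^{-2m_I|x|}$ uniformly in $\eta$ at once, so $\sum_x x_\alpha x_\beta\,\E\{|G|^2\}$ is bounded uniformly in $\eta$ and the prefactor $\eta^2$ forces the limit in \eqref{conductivityAG1} to vanish, whence $\tilde\sigma^{DC}_{\alpha,\beta}(E_F)=0$ once $\rho(E_F)>0$. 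The finite-volume-to-infinite-volume ``transfer step'' you keep in reserve is therefore not needed in the form the paper states the hypothesis, and your closing remark about the almost-everywhere versus everywhere reading of the hypothesis is a fair observation about the statement rather than a gap in the argument.
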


The key MSA bound on the Green's function is that,
with probability one,
\beq\label{msaest1}
\sup_{\eta > 0} | G_\omega (x,y; E+i \eta)| \leq C_I e^{- m_I | x - y| }.
\eeq
The exponential decay of the Green's function then implies that $\sigma_{\alpha, \beta}^{(DC)}
= 0$. Formula (\ref{conductivityAG1}) is equivalent to (\ref{kubo4}). The following
derivation is presented, for example, in \cite{AG}:
\bea\label{green2}
\lefteqn{ \frac{\eta^2}{\pi} \sum_{x \in \Z^d} x_\alpha x_\beta \E \{ |G(0,x; E_F + i \eta)|^2 \}
} && \nonumber \\
&=& - \frac{\eta^2}{\pi} \E \{ \langle 0| [ x_\alpha, (H-E_F +i \eta)^{-1} ][ x_\beta, (H-E_F -i\eta)^{-1}] |0 \rangle \}
\nonumber \\
 &=& \frac{\eta^2}{\pi} \mathcal{T} \{ R(E_F -i \eta) v_\alpha R(E_F -i \eta) R(E_F +i \eta) v_\beta R(E_F +i \eta) \}
 \nonumber \\
 &=& \int \int \delta_\eta (\lambda - E_F) \delta_\eta ( \nu - E_F) M_{\alpha, \beta} (d \lambda, d\nu) .
\eea
If, as in section \ref{cccm-intro}, we assume the existence of a ccc-density $m_{\alpha, \beta}$,
we may compute the limit $\eta \rightarrow 0$ in (\ref{green2}). We
then obtain $\tilde{\sigma}_{\alpha, \beta}^{DC} (E_F) = m_{\alpha, \beta} (E_F, E_F)$.
So, by (\ref{kubo3}), this gives the same conductivity as
defined in (\ref{kubo1})--(\ref{kubo2}).


\subsubsection{Kunz}

Kunz \cite{[Kunz1]} considered the two-dimensional Landau Hamiltonian with a random potential on $L^2 (\R^2)$
so that $H_\omega = (1/2)(i \nabla + A_0)^2 + V_\omega$, where $A_0 = B(x_2, 0)$ and the velocity
operators $v_j = i [ H_\omega, x_j]$. He assumed
that the conductivity is given by a Kubo formula that he wrote as
\beq\label{kunz1}
\sigma_{1,2} (E_F) = \lim_{\epsilon \rightarrow 0} \frac{1}{i \epsilon} \int_0^\infty e^{- \epsilon t}
\mathcal{T} ( P_{E_F} [ v_1, e^{it H_\omega} v_2 e^{- it H_\omega} ] P_{E_F} ) ~dt,
\eeq
and for the diagonal terms
\beq\label{kunz2}
\sigma_{jj} (E_F) = \lim_{\epsilon \rightarrow 0} \frac{1}{i \epsilon} \int_0^\infty e^{- \epsilon t}
\mathcal{T} ( P_{E_F} [ v_j, e^{it H_\omega} v_j e^{- it H_\omega} ] P_{E_F} ) ~dt + \frac{1}{\epsilon}
\mathcal{T}(P_{E_F}),
\eeq
for $j = 1,2$. Note that \eqref{kunz1} and \eqref{kunz2} follow directly from the above Kubo formula \eqref{kubo-bgks1}, with $ \xi_{E_F}=P_{E_F} $, after integrating by part (and a change of variable $t\to-t$). Assuming band-edge localization for $H_\omega$, he shows that if the disorder is weak enough
relative to $B$ so that there is a gap between the Landau bands and if the Fermi energy $E_F$
lies in the gap between the $n^{th}$ and $(n+1)^{st}$-bands,
then the transverse conductivity $\sigma_{1,2} (E_F)$ is a universal multiple of $(n+1)$.
He also provides arguments for the  localization length to diverge in each Landau band.

\subsubsection{Bellissard, van Elst, and Schultz-Baldes's  Work on Lattice Models}

Bellissard, van Elst, and Schulz-Baldes \cite{BvESB} derived
(\ref{kubo-streda-1}) in a one-electron model using a {\it relaxation time
approximation}. The one-particle Hamiltonian differs from (\ref{general2})
in that a time-dependent perturbation $W_{coll} (t)$ is added that mimics a dissipation process.
The interaction has the form $W_{coll}(t) = \sum_{k \in \Z} W_k \delta (t - t_k)$.
The ordered collision times
$t_k$ are Poisson distributed so that $\tau_k \equiv t_k - t_{k-1}$ are
independent, identically distributed
random variables with an exponential distribution
and mean collision time $\tau = \E (\tau_k)$.
The amplitudes $W_k$ are the collision operators that
are assumed to commute with $H$ and be random operators. The model is discrete.
They computed the time-averaged current using this evolution and found
\beq\label{bvesb1}
J_{\beta, \mu, \mathcal{E}} (\delta) = \sum_{i=1}^2 \mathcal{E}_i
\mathcal{T} ( [x_i, n_F (H;T) ] \frac{1}{ \delta + \tilde{\tau} - \mathcal{L}_H -
\mathcal{E} \cdot \nabla } [ x, H ]) .
\eeq
where $\mathcal{L}_H (A) = i [ H, A]$ is the (bounded) Liouvillian.
The authors then neglected the $\mathcal{E} \cdot \nabla$ term in the resolvent
appearing in (\ref{bvesb1}) and took the limit $\delta \rightarrow 0$. This exists provided
the collision factor $\tilde{\tau} > 0$. This factor is proportional to $1 / \tau$,
the relaxation time. Upon differentiating with respect to the electric field,
they obtained
\beq\label{bvesb2}
\sigma_{\alpha, \beta} = \mathcal{T} ( [x_\alpha,
n_F (H;T) ] \frac{1}{ \tilde{\tau} - \mathcal{L}_H } [ x_\beta , H ]) .
\eeq

In certain cases, the temperature $T$ can be taken to zero and
the relaxation time can be taken to infinity so $\tilde{\tau} \rightarrow 0$.
For example, they proved that for the Landau Hamiltonian with a
random potential in two-dimensions, the off-diagonal conductivity $\sigma^{DC}_{1,2} (E)$ agrees with Kubo-St\v{r}eda formula \eqref{kubo-streda-1} and
is a constant on energy intervals where (\ref{fermiproj2}) holds.
Moreover, it is an integer multiple of $e^2 / h$.

\subsubsection{Aizenman-Graf's Work on Lattice Models}

Stimulated by the integer quantum Hall effect,
Aizenman and Graf \cite{AG} considered the analog of the randomly perturbed Landau
Hamiltonian on the lattice. In this case, the DC conductivity for the system with a constant
magnetic field is given by the Kubo-St\v{r}eda formula:
\beq\label{kubo-str1}
\sigma^{DC}_{\alpha, \beta} (E) = i {\rm Tr} \{ P_E [[ x_\alpha, P_E], [x_\beta, P_E]] \},
\eeq
provided the operators are trace class. The exponential decay of the kernel of the Fermi
projector (\ref{fermiproj1}) implies (\ref{fermiproj2}).
%
%

\subsubsection{Nakano's Result for Lattice Models}\label{nakano1}

The vanishing of the diagonal terms of the DC conductivity
tensor for random, ergodic Schr\"odinger operators
on the lattice $\Z^d$ was also proven by Nakano \cite{Nakano1}
provided the Fermi projector satisfies (\ref{fermiproj2}).

Nakano defined two (scalar) conductivities associated with the $x_1$-direction
(any coordinate direction can be used) differing in the order in which certain limits are taken.
Let $H_{\omega, \mathcal{E}}= H_\omega + \mathcal{E} x_1$,
corresponding to the Anderson Hamiltonian $H_\omega$ with an electric field in the $x_1$-direction.
The current of the system at zero temperature and at time $t$ is given by
\beq\label{current-naka1}
J_1 (t; \mathcal{E}; E_F) =
\mathcal{T} ( e^{i t H_{\omega, \mathcal{E}}}
v_1 e^{- i t H_{\omega, \mathcal{E}}} P_{E_F} ),
\eeq
where $v_1 = i [ H_\omega, x_1]$. The corresponding conductivity is given by
\beq\label{cond-naka1}
\tilde{\sigma}_a (t; E_F) \equiv \lim_{\mathcal{E}\rightarrow 0}
\frac{1}{\mathcal{E}} J_1 (t; \mathcal{E}; E_F).
\eeq
consistent with linear response theory.
The first definition of the conductivity $\sigma_a (E_F)$ is the time average of the conductivity
$\tilde{\sigma}_a (t; E_F)$ defined in (\ref{cond-naka1}):
\beq\label{defn-nakano1}
\sigma_a (E_F) \equiv \lim_{T \rightarrow \infty} \frac{1}{T} \int_0^T ~dt
~\tilde{\sigma}_1 (t; E_F) .
\eeq
The second definition of Nakano replaces  the time-average by an Abel limit:
\beq\label{defn-nakano2}
\sigma_b (E_F) \equiv \lim_{\delta \rightarrow 0} \lim_{\mathcal{E} \rightarrow 0} \frac{1}{\mathcal{E}}
~\int_0^\infty ~dt
\delta e^{-\delta t} ~J_1 (t; \mathcal{E}; E_F) .
\eeq
This is the same as the result derived in \cite{BvESB} using the relaxation time approximation.

\begin{theorem}
Under the assumption (\ref{fermiproj2}) on the Fermi projector $P_{E_F}$,
with $|x|$ replaced by $|x_1|$, we have
\beq\label{nakano-main1}
\sigma_a (E_F) = \sigma_b (E_F) = 0 .
\eeq
\end{theorem}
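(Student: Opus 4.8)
\emph{Strategy.} I would follow the pattern of the results reviewed in Section~\ref{related1}: expand the current to first order in $\mathcal{E}$ (linear response), identify the resulting time-dependent correlation, and then show that both the time average (defining $\sigma_a$) and the Abel average (defining $\sigma_b$) collapse onto the ``diagonal part'' of that correlation, which vanishes. The hypothesis \eqref{fermiproj2} (with $|x|$ replaced by $|x_1|$) is exactly what makes every step well-defined and convergent.

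\emph{Step 1: Duhamel expansion.} Since $i[H_{\omega,\mathcal{E}},x_1]=i[H_\omega,x_1]=v_1$, the velocity in $J_1$ is unaffected by the field. Writing $P(t):=e^{-itH_{\omega,\mathcal{E}}}P_{E_F}e^{itH_{\omega,\mathcal{E}}}$ and using $[H_\omega,P_{E_F}]=0$ one gets $\frac{d}{dt}P(t)=-i\mathcal{E}\,e^{-itH_{\omega,\mathcal{E}}}[x_1,P_{E_F}]e^{itH_{\omega,\mathcal{E}}}$, and cyclicity of $\mathcal{T}$ gives $J_1(t;\mathcal{E};E_F)=\mathcal{T}(v_1 P(t))$ with zeroth order term $\mathcal{T}(v_1P_{E_F})=i\,\mathcal{T}(x_1[P_{E_F},H_\omega])=0$. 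Expanding to first order and using $\int_0^t e^{isH_\omega}v_1e^{-isH_\omega}\,ds=x_1(t)-x_1$, where $x_1(t):=e^{itH_\omega}x_1e^{-itH_\omega}$, one obtains
\[
\tilde\sigma_a(t;E_F)=\lim_{\mathcal{E}\to0}\frac1{\mathcal{E}}J_1(t;\mathcal{E};E_F)=-i\,\mathcal{T}\!\big((x_1(t)-x_1)\,[x_1,P_{E_F}]\big).
\]
Assumption \eqref{fermiproj2} is precisely the statement that $[x_1,P_{E_F}]\chi_0$ is Hilbert--Schmidt, i.e.\ $[x_1,P_{E_F}]\in\mathcal{K}_2$; this, together with the uniform bound $\|e^{isH_\omega}v_1e^{-isH_\omega}\chi_0\|_2\le\|v_1\|$, makes the trace-per-unit-volume finite and legitimizes the manipulations. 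It also implies, via the $x_1$-analogue of Proposition~\ref{prop-second1} and the results of \cite{BvESB}, that the localization length below $E_F$ is finite and $H_\omega$ has pure point spectrum there.

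\emph{Step 2: averaging and the diagonal collapse.} By the mean ergodic theorem for the one-parameter group $A\mapsto e^{itH_\omega}Ae^{-itH_\omega}$ (generated by the Liouvillian $\mathcal{L}_{H_\omega}=i[H_\omega,\cdot\,]$), both $\frac1T\int_0^T x_1(s)\,ds$ and $\delta\int_0^\infty e^{-\delta s}x_1(s)\,ds$ converge, as $T\to\infty$ and $\delta\to0$, to the conditional expectation $\mathbb{E}_{H_\omega}(x_1)$ of $x_1$ onto the commutant of $H_\omega$; note that $\mathbb{E}_{H_\omega}(x_1)$ commutes with $P_{E_F}$ and that $\mathbb{E}_{H_\omega}(v_1)=i\,\mathbb{E}_{H_\omega}([H_\omega,x_1])=0$. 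Feeding this into Step~1 (the bounds being uniform, the limit may be passed through $\mathcal{T}$) gives
\[
\sigma_a(E_F)=\sigma_b(E_F)=-i\,\mathcal{T}\!\big((\mathbb{E}_{H_\omega}(x_1)-x_1)\,[x_1,P_{E_F}]\big).
\]
This is the diagonal component of the Kubo--St\v{r}eda-type tensor, and it vanishes: the ``diagonal'' operator $\mathbb{E}_{H_\omega}(x_1)$ paired against the ``off-diagonal'' $[x_1,P_{E_F}]$ reproduces $i\,\mathcal{T}(P_{E_F}[[x_1,P_{E_F}],[x_1,P_{E_F}]])=0$, the double commutator of an operator with itself. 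Equivalently: $\mathcal{T}(\mathbb{E}_{H_\omega}(x_1)[x_1,P_{E_F}])=0$ because $\mathbb{E}_{H_\omega}(x_1)$ commutes with $P_{E_F}$, while $\mathcal{T}(x_1[x_1,P_{E_F}])=\langle\delta_0,x_1[x_1,P_{E_F}]\delta_0\rangle=0$ because $\chi_0=|\delta_0\rangle\langle\delta_0|$ sits at the lattice origin, where $x_1\delta_0=0$. This is consistent with Corollary~\ref{kubo-streda-bgks1} applied to the time-reversal invariant $H_\omega=L+V_\omega$, all of whose Kubo--St\v{r}eda conductivity components vanish.

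\emph{Main obstacle.} The algebra is short; the substance is the analysis. The hard part is twofold. First, the first-order expansion of Step~1 must be made rigorous in the presence of the \emph{unbounded} perturbation $\mathcal{E}x_1$: one must control the propagator of $H_{\omega,\mathcal{E}}$, verify that the remainder is genuinely $o(\mathcal{E})$ in the $\mathcal{K}_2$-topology uniformly on compact $t$-intervals, and justify taking $\mathcal{E}\to0$ before (and under) the time integral defining $\sigma_a$ and the Abel integral defining $\sigma_b$. Second, for the two averages in Step~2 to exist and for the limits to commute with $\mathcal{T}$ one needs $t$-uniform control of $(x_1(t)-x_1)[x_1,P_{E_F}]$ in the trace-per-unit-volume sense, and this is where the full force of the localization hypothesis is used (finite localization length and pure point spectrum below $E_F$ together with \eqref{fermiproj2}, not merely $[x_1,P_{E_F}]\in\mathcal{K}_2$). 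These two points are the real content of Nakano's proof; the ergodic collapse onto the diagonal term, and the vanishing of that term, are then essentially formal.
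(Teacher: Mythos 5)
The paper does not actually prove this theorem: it is quoted from Nakano \cite{Nakano1} in a survey subsection, so there is no in-paper argument to compare against. Your outline is essentially the standard (and Nakano's) route --- linearize in $\mathcal{E}$ to get $\tilde\sigma_a(t)=-i\,\mathcal{T}\big((x_1(t)-x_1)[x_1,P_{E_F}]\big)$, use \eqref{fermiproj2} to put $[x_1,P_{E_F}]$ in $\mathcal{K}_2$, invoke pure point spectrum to make the time/Abel averages converge, and land on the $\alpha=\beta$ component of the Kubo--St\v{r}eda tensor \eqref{kubo-streda-1}, which vanishes because $[[x_1,P_{E_F}],[x_1,P_{E_F}]]=0$. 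That is the right architecture, and your identification of where the analytic work lies (the unbounded perturbation $\mathcal{E}x_1$ and the interchange of limits with $\mathcal{T}$) is accurate.

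There is, however, one genuine soft spot in your endgame. You split the covariant operator $\mathbb{E}_{H_\omega}(x_1)-x_1$ into the two non-covariant pieces $\mathbb{E}_{H_\omega}(x_1)$ and $x_1$ and treat $\mathcal{T}$ of each separately: the evaluation $\mathcal{T}(x_1[x_1,P_{E_F}])=\langle\delta_0,x_1[x_1,P_{E_F}]\delta_0\rangle=0$ is a computation with the functional $\E\{\langle\delta_0,\cdot\,\delta_0\rangle\}$, which is \emph{not} a trace on non-covariant operators, and your argument that $\mathcal{T}(\mathbb{E}_{H_\omega}(x_1)[x_1,P_{E_F}])=0$ "because $\mathbb{E}_{H_\omega}(x_1)$ commutes with $P_{E_F}$" tacitly uses cyclicity of $\mathcal{T}$ on the non-covariant factor $\mathbb{E}_{H_\omega}(x_1)$, where it fails (note $U_a x_1 U_{-a}=x_1-a_1$). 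Likewise the mean ergodic theorem should be applied to $v_1\in\mathcal{K}_2$ (or to $[x_1,P_{E_F}]$), not to $x_1\notin\mathcal{K}_2$. The clean way to finish: with $Q:=[x_1,P_{E_F}]\in\mathcal{K}_2$ and $P:=P_{E_F}$, one has $Q=PQP^\perp+P^\perp QP$ and $Q^*=-Q$, and the covariant pairing satisfies $\mathcal{T}\big((\mathbb{E}_{H_\omega}(x_1)-x_1)Q\big)=\|PQP^\perp\|_{\mathcal{K}_2}^2-\|P^\perp QP\|_{\mathcal{K}_2}^2=0$, since $(PQP^\perp)^*=-P^\perp QP$ and the $\mathcal{K}_2$-norm is $*$-invariant; equivalently, this is exactly the antisymmetry of \eqref{kubo-streda-1} exploited in \cite{BvESB} and Corollary~\ref{kubo-streda-bgks1}. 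With that repair your sketch matches the known proof.
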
.


\subsection{Localization and the Decay of the Fermi Projector}

When does the hypothesis (\ref{assumption1}) for continuum models, or
(\ref{fermiproj2}) for lattice models, hold?
Aizenman and Graf \cite{AG} used the method of fractional moments, developed in \cite{AM1} and
\cite{A1} for lattice models on $\ell^2 ( \Z^d)$,
to prove the decay of the Fermi projector (\ref{fermiproj2}) provided the Fermi energy
is in the strong localization regime.

The basic hypothesis is that the fractional moment estimate holds:
\beq\label{fmest1}
\sup_{\eta > 0}
\E \{ | G(x,y; E + i \eta )|^s \} \leq C_s e^{-s \mu |x-y| }, ~x,y \in \Z^d,
\eeq
uniformly for $\eta > 0$, for some $0 < s < 1$ and $\mu > 0$.
If (\ref{fmest1}) holds at energy $E_F$, then they prove that the kernel of the Fermi distribution at $T=0$
satisfies
\beq\label{fermiproj1}
\E \{ \langle x | P_{E_F} | y \rangle \} \leq C_0 e^{- \mu |x-y|} .
\eeq
As pointed out by Aizenman and Graf, this exponential decay estimate on the Fermi projector
requires only that $E_F$ lie in a regime of energies for which the fractional moment bound
(\ref{fmest1}) holds. It does not require that (\ref{fmest1}) hold at all energies below
$E_F$.
The fractional moment bound (\ref{fmest1})
is known to hold in the strong localization regime or at extreme energies,
cf.\ \cite{{A1},{AM1}}.

At last, we mention that fast decay of the kernel of the Fermi projection at a given energy $E$  turns out to be equivalent to $E\in\Xi^{CL}$ \cite{GKsudec}.

\begin{acknowledgement}
  PDH thanks JMC for an invitation to CPT and FG for an invitation to Universit\'e de Cergy-Pontoise
  where some of this work was done. The authors thank le Centre Interfacultaire Bernoulli
  and acknowledge partial support from the Swiss NSF during the time this work was completed.
  The authors thank L.\ Pastur for many enjoyable and stimulating discussions. JMC and FG were
  supported in part by ANR 08 BLAN 0261. PDH was supported in part by NSF grant 0803379. We thank the referees
for several helpful comments.
\end{acknowledgement}


\end{document}